\documentclass{amsart}
\usepackage{amssymb}
\usepackage{amsmath}
\usepackage{amsthm}
\usepackage{mathrsfs}
\usepackage{colortbl}
\newtheorem{theorem}{Theorem}
\newtheorem{lemma}[theorem]{Lemma}
\newtheorem{assumption}[theorem]{Assumption}
\newtheorem{remark}[theorem]{Remark}

\theoremstyle{definition}
\newtheorem{definition}[theorem]{Definition}

\newtheorem{proposition}[theorem]{Proposition}

\makeatletter

\@addtoreset{theorem}{section}
\makeatother

\makeatletter

\@addtoreset{equation}{section}
\makeatother

\begin{document}                                                 
\title[On Compressible Fluid Flows on an Evolving Surface with a Boundary]{On Generalized Compressible Fluid Systems on an Evolving Surface with a Boundary}                                
\author[Hajime Koba]{Hajime Koba}                                
\address{Graduate School of Engineering Science, Osaka University,\\
1-3 Machikaneyamacho, Toyonaka, Osaka, 560-8531, Japan}                                  
\email{iti@sigmath.es.osaka-u.ac.jp}

\keywords{Mathematical modeling, Energetic variational approach, Compressible fluid system, Evolving surface with boundary, Boundary condition in co-normal direction}                           
\subjclass[2010]{37E35, 49S05, 49Q20}                          
\begin{abstract}
We consider compressible fluid flow on an evolving surface with a smooth boundary from an energetic point of view. We employ both an energetic variational approach and the first law of thermodynamics to make mathematical models for compressible fluid flow on the evolving surface. Moreover, we investigate the boundary conditions in co-normal direction for our fluid systems to study the conservation and energy laws of the systems.
\end{abstract}       
\maketitle

\section{Introduction}\label{sect1}

We are interested in a mathematical modeling of soap bubble clusters and membrane flows. A soap bubble cluster consists of some surfaces with a boundary. When we focus on a soap bubble, we can see the fluid flow in the bubble. We call the fluid flow in the bubble a surface flow. We can consider a surface flow as fluid flow on an evolving surface. An evolving surface means that the surface is moving or the shape of the surface is changing along with the time. This paper has two aims. The first one is to derive the generalized compressible fluid system on an evolving surface with a boundary from an energetic point of view. \emph{Generalized} means that the system includes Newtonian fluid, non-Newtonian fluid, and power law fluid. The second one is to investigate the natural boundary conditions for the system. \emph{Natural} means that the system satisfies some conservation laws.

Some researchers such as Arnol'd (\cite{Arn66} , \cite{Arn97}), Taylor \cite{Tay92}, Dziuk-Elliott \cite{DE07}, Koba-Liu-Giga \cite{KLG17}, Koba \cite{K18} made their fluid and diffusion systems on a closed manifold or an evolving closed surface under their assumptions. On the other hand, this paper makes a mathematical model for compressible fluid flow on an evolving surface with a boundary. Especially, we deal with the boundary conditions in co-normal direction for our model.

Let us first introduce basic notations. Let $x = { }^t ( x_1 , x_2, x_3) \in \mathbb{R}^3$, $\xi = { }^t ( \xi_1 , \xi_2 , \xi_3 ) \in \mathbb{R}^3$, $X = { }^t (X_1 , X_2) \in \mathbb{R}^2$ be the \emph{spatial variables}, and $t, \tau \geq 0$ be the \emph{time variables}. Let $T \in (0, \infty]$, and let $\Gamma (t) (= \{ \Gamma (t) \}_{0 \leq t < T}) $ be an \emph{evolving surface with a smooth boundary}. The symbol $n = n (x,t) = { }^t (n_1,n_2 , n_3)$ denotes the \emph{unit outer normal vector} at $x \in \overline{\Gamma (t)}$, and $\nu = \nu ( x, t) = { }^t (\nu_1 , \nu_2 , \nu_3 )$ denotes the \emph{unit outer co-normal vector} at $x \in \partial \Gamma (t)$, where $\partial \Gamma (t)$ denotes the boundary of the surface $\Gamma (t)$. The notations $\rho = \rho ( x, t )$, $\sigma = \sigma (x,t)$, $\theta = \theta (x , t)$, $e = e (x,t)$, and $C = C (x,t)$ represent the \emph{density}, the \emph{total pressure}, the \emph{temperature}, the \emph{internal energy} of the fluid on $\Gamma (t)$, and the \emph{concentration of amount of substance} in the fluid on $\Gamma (t)$, respectively. Note that the total pressure $\sigma$ includes both surface pressure and surface tension in general. The symbol $F = F (x , t) = { }^t (F_1 , F_2 , F_3)$ denotes an \emph{exterior force}. The notation $u = u (x , t ) = { }^t (u_1, u_2 , u_3 )$ denotes the \emph{relative fluid velocity} of a fluid particle at a point $x = { }^t (x_1 , x_2 , x_3 )$ of the evolving surface $\Gamma (t)$, and $w = w (x , t) = { }^t (w_1 , w_2 ,w_3)$ means the \emph{motion velocity} of the evolving surface $\Gamma (t)$. We often call $u$ \emph{surface flow} and $w$ the \emph{speed} of the evolving surface $\Gamma (t)$ (see \cite[Fig.1]{K18} for $u$ and $w$). By introducing the surface flow $u$ and the motion velocity $w$, then there is no exchange of particles between the surface and the environment. The velocity
\begin{equation*}
v = v (x , t) : = { }^t ( v_1 , v_2 , v_3 ) = u + w
\end{equation*}
is defined as the \emph{total velocity} of a fluid particle at a point $x$ of $\Gamma (t)$. In this paper we focus on the total velocity $v$, and assume that $\rho$, $\sigma$, $\theta$, $e$, $C$, $F$ and $v$ are smooth functions in $\mathbb{R}^4$. Set
\begin{equation*}
\Gamma_T = \bigcup_{0< t <T}\{ \Gamma (t) \times \{ t \} \}, { \ } \partial \Gamma_T = \bigcup_{0< t <T}\{ \partial \Gamma (t) \times \{ t \} \}.
\end{equation*}

This paper has four purposes. The first one is to make an abstract model for compressible fluid flow on the evolving surface $\Gamma (t)$ from an energetic point of view. More precisely, we apply both an energetic variational approach and the first law of thermodynamics to derive the following \emph{generalized compressible fluid system} on $\Gamma_T$:
\begin{equation}\label{eq11}
\begin{cases}
D_t \rho + ({\rm{div}}_\Gamma v) \rho = 0 & \text{ on } \Gamma_T,\\
\rho D_t v = {\rm{div}}_\Gamma S_{\Gamma} (v, \sigma ) + \rho F& \text{ on } \Gamma_T,\\
\rho D_t e + ({\rm{div}}_\Gamma v) \sigma = {\rm{div}}_\Gamma q_\theta + \tilde{e}_D& \text{ on } \Gamma_T,\\
D_t C + ({\rm{div}}_\Gamma v) C = {\rm{div}}_\Gamma q_C & \text{ on } \Gamma_T ,
\end{cases}
\end{equation}
where $D_t f = \partial_t f + ( v \cdot \nabla ) f$,
\begin{align*}
S_\Gamma (v , \sigma ) & = e_1' (| D_\Gamma (v) |^2 ) D_\Gamma (v) + e_2' ( | {\rm{div}}_\Gamma v |^2 ) ({\rm{div}}_\Gamma v ) P_\Gamma  - \sigma P_\Gamma ,\\
q_\theta & =  e_3'( | {\rm{grad}}_\Gamma \theta  |^2 ) {\rm{grad}}_\Gamma \theta,\\
\tilde{e}_D & = e_1' (| D_\Gamma (v) |^2 ) |D_\Gamma (v)|^2 + e_2' (|{\rm{div}}_\Gamma v|^2 ) | {\rm{div}}_\Gamma v |^2,\\
q_C & =  e_4'( | {\rm{grad}}_\Gamma C  |^2 ) {\rm{grad}}_\Gamma C.
\end{align*}
Here $| D_\Gamma (v) |^2 = D_\Gamma (v):D_\Gamma (v)$, $D_\Gamma (v) = P_\Gamma D (v) P_\Gamma$, $P_\Gamma = I_{3 \times 3} - n \otimes n$, $D (v) = \{ (\nabla v ) + { }^t ( \nabla v) \}/2$, $\nabla = { }^t (\partial_1 , \partial_2 , \partial_3)$, $\partial_j = \partial/{\partial x_j}$, $\partial_t = \partial/{\partial t}$, $\otimes $ denotes the tensor product, $e_{\mathfrak{j}}$ is a $C^1$-function, and $e_\mathfrak{j}' = e_\mathfrak{j}'( r ) = {d e_\mathfrak{j}}/{d r}$ $(\mathfrak{j}=1,2,3,4)$. See Section \ref{sect2} for the surface divergence ${\rm{div}}_\Gamma$ and surface gradient ${\rm{grad}}_\Gamma$. In this paper, we call $D_\Gamma (v)$, $S_\Gamma (v , \sigma)$, $\tilde{e}_D$, and $P_\Gamma$, the \emph{generalized surface stretching tensor}, the \emph{generalized surface stress tensor}, the \emph{generalized density for the energy dissipation due to the viscosities}, and an \emph{orthogonal projection to a tangent space}. We also call $q_\theta$ and $q_C$ the \emph{generalized heat flux} and the \emph{generalized surface flux}, respectively. We often call $S_\Gamma (v , \sigma)$ the \emph{surface stress tensor determined by the Boussinesq-Scriven law} if $e_1 (r ) = \mu_1 r$ and $e_2 ( r ) = \mu_2 r$ for some $\mu_1, \mu_2 \in \mathbb{R}$. Note also that
\begin{align*}
{\rm{div}}_\Gamma S_\Gamma (v , \sigma ) &= {\rm{div}}_\Gamma \{ \mu_1 D_\Gamma (v) + \mu_2 ({\rm{div}}_\Gamma v) P_\Gamma  \} - {\rm{grad}}_\Gamma \sigma - \sigma H_\Gamma n,\\ 
{\rm{div}}_\Gamma q_\theta &= \mu_3 \Delta_\Gamma \theta,\\
{\rm{div}}_\Gamma q_C & = \mu_4 \Delta_\Gamma C
\end{align*}
if $e_{\mathfrak{j}} ( r ) = \mu_{\mathfrak{j}} r$ for some $\mu_{\mathfrak{j}} \in \mathbb{R}$, where $H_\Gamma$ is the mean curvature in the direction $n$, and $\Delta_\Gamma$ is the Laplace-Beltrami operator. See Section \ref{sect2} for details.

The second one is to study the conservative form, the enthalpy, the entropy, the Helmholtz free energy, and the conservation laws of the system \eqref{eq11}. In fact, we can write the system \eqref{eq11} as follows:
\begin{equation}\label{eq12}
\begin{cases}
D_t^N \rho + {\rm{div}}_\Gamma (\rho v) =0& \text{ on } \Gamma_T,\\
D_t^N (\rho v) + {\rm{div}}_\Gamma \{ \rho v \otimes v - S_\Gamma (v,\sigma ) \} = \rho F& \text{ on } \Gamma_T,\\
D_t^N E_S + {\rm{div}}_\Gamma \{ E_S v - q_\theta - S_\Gamma (v, \sigma ) v \}  = \rho F \cdot v& \text{ on } \Gamma_T,\\
D_t^N C  + {\rm{div}}_\Gamma \{ C v - q_C \} = 0 & \text{ on } \Gamma_T.
\end{cases}
\end{equation}
Here $E_S$ is the \emph{total energy} defined by $E_S = \rho |v|^2/2 + \rho e$, and $D_t^N$ is the \emph{time derivative with the normal derivative} defined by
\begin{equation*}
D_t^N f = \partial_t f + ( v \cdot n ) (n \cdot \nabla ) f.
\end{equation*}
Under the assumptions that $h = e + \sigma / \rho $, $D_t e = \theta D_t s - \sigma D_t (1/\rho)$, $e_F = e - \theta s$, and $\rho , \theta >0$, the \emph{enthalpy} $h = h (x,t)$, the \emph{entropy} $s = s (x,t)$, and the \emph{Helmholtz free energy} $e_F = e_F (x,t)$ satisfy
\begin{align}
& D_t^N (\rho h) + {\rm{div}}_\Gamma \{ \rho h v - q_\theta \} = \tilde{e}_D + D_t \sigma & \text{ on }\Gamma_T, \label{eq13}\\
& D_t^N (\rho s ) + {\rm{div}}_\Gamma \left\{ \rho s v - \frac{q_\theta}{\theta} \right\} = \frac{\tilde{e}_D}{\theta} + \frac{e_3'(| {\rm{grad}}_\Gamma \theta |^2 ) |{\rm{grad}}_\Gamma \theta |^2 }{\theta^2} & \text{ on }\Gamma_T, \label{eq14}
\end{align}
and
\begin{equation*}
\rho D_t e_F + s \rho D_t \theta - S_\Gamma (v , \sigma ) : D_\Gamma (v) = - \tilde{e}_D \text{ on }\Gamma_T. 
\end{equation*}
We call \eqref{eq14} the \emph{generalized Clausius-Duhem inequality} if $e'_1,e'_2,e'_3$ are non-negative functions. Moreover, if system \eqref{eq11} satisfies the following boundary conditions in co-normal direction:
\begin{align}
S_\Gamma (v , \sigma ) \nu |_{\partial \Gamma_T} & = { }^t (0,0,0),\label{eq15}\\
\frac{\partial \theta }{\partial \nu} \bigg|_{\partial \Gamma_T} & =0 ,\label{eq16}\\
\frac{\partial C }{\partial \nu} \bigg|_{\partial \Gamma_T} & = 0 ,\label{eq17}
\end{align}
then any solution to system \eqref{eq11} satisfies that for $t_1 < t_2$,
\begin{align}
\int_{\Gamma (t_2)} \rho { \ }d \mathcal{H}^2_x & = \int_{\Gamma (t_1)} \rho { \ }d \mathcal{H}^2_x,\label{eq18}\\
\int_{\Gamma (t_2)} \rho v { \ }d \mathcal{H}^2_x & = \int_{\Gamma (t_1)} \rho v{ \ }d \mathcal{H}^2_x + \int_{t_1}^{t_2} \int_{\Gamma ( \tau )} \rho F { \ } d \mathcal{H}^2_x d \tau,\label{eq19}\\
\int_{\Gamma (t_2)}x \times \rho v { \ }d \mathcal{H}^2_x & = \int_{\Gamma (t_1)} x \times \rho v{ \ }d \mathcal{H}^2_x + \int_{t_1}^{t_2} \int_{\Gamma ( \tau )} x \times \rho F { \ } d \mathcal{H}^2_x d \tau, \label{Eq110}\\
\int_{\Gamma (t_2)} E_S { \ }d \mathcal{H}^2_x & = \int_{\Gamma (t_1)} E_S { \ }d \mathcal{H}^2_x + \int_{t_1}^{t_2} \int_{\Gamma (\tau)} \rho F \cdot v { \ }d \mathcal{H}^2_x d \tau ,\label{Eq111}\\
\int_{\Gamma (t_2)}  C { \ }d \mathcal{H}^2_x & = \int_{\Gamma (t_1)} C{ \ }d \mathcal{H}^2_x. \label{Eq112}
\end{align}
Here $d \mathcal{H}^k_x$ denotes the \emph{k-dimensional Hausdorff measure} and
\begin{equation*}
\frac{\partial f }{\partial \nu} := ( \nu \cdot \nabla_\Gamma ) f,
\end{equation*}
where $\nabla_\Gamma = {\rm{grad}}_\Gamma$. We often call \eqref{eq18}, \eqref{eq19}, \eqref{Eq110}, and \eqref{Eq111}, the \emph{law of conservation of mass}, the \emph{law of conservation of momentum}, the \emph{law of conservation of angular momentum}, and  the \emph{law of conservation of total energy}, respectively. See Theorem \ref{Thm210} for details.

The third one is to make an abstract model for the tangential compressible fluid flow on the evolving surface $\Gamma (t)$ from an energetic point of view. More precisely, we apply both an energetic variational approach and the first law of thermodynamics to derive the following \emph{generalized tangential compressible fluid system} on $\Gamma_T$:
\begin{equation}\label{Eq113}
\begin{cases}
v \cdot n =0 & \text{ on } \Gamma_T,\\
D^\Gamma_t \rho + ({\rm{div}}_\Gamma v) \rho = 0 & \text{ on } \Gamma_T,\\
P_\Gamma \rho D^\Gamma_t v = P_\Gamma {\rm{div}}_\Gamma S_{\Gamma} (v, \sigma ) + P_\Gamma \rho F& \text{ on } \Gamma_T,\\
\rho D^\Gamma_t e + ({\rm{div}}_\Gamma v) \sigma = {\rm{div}}_\Gamma q_\theta + \tilde{e}_D& \text{ on } \Gamma_T,\\
D^\Gamma_t C + ({\rm{div}}_\Gamma v) C = {\rm{div}}_\Gamma q_C & \text{ on } \Gamma_T,
\end{cases}
\end{equation}
where $D^\Gamma_t f := \partial_t f + ( v \cdot \nabla_\Gamma ) f $. Note that $D_t f = D^\Gamma_t f $ if $v \cdot n = 0$. Note that if system \eqref{Eq113} satisfies ether \eqref{eq15} or
\begin{equation}\label{Eq114}
v |_{\partial \Gamma_T} = { }^t (0,0,0),
\end{equation}
then any solution to system \eqref{Eq113} satisfies that for $t_1 < t_2$,
\begin{multline}\label{Eq115}
\int_{\Gamma (t_2)} \frac{1}{2} \rho |v|^2 { \ }d \mathcal{H}^2_x + \int_{t_1}^{t_2} \int_{\Gamma (\tau )} \tilde{e}_D { \ }d \mathcal{H}^2_x d \tau \\
= \int_{\Gamma (t_1)} \frac{1}{2} \rho |v|^2 { \ }d \mathcal{H}^2_x + \int_{t_1}^{t_2} \int_{\Gamma (\tau )} \{ ({\rm{div}}_\Gamma v ) \sigma + \rho F \cdot v \} { \ }d \mathcal{H}^2_x d \tau .
\end{multline}
This is the energy law of system \eqref{Eq113}. Note also that system \eqref{eq11} satisfies \eqref{Eq115} if either \eqref{eq15} or \eqref{Eq114} holds. See Theorem \ref{Thm211} for details.

The fourth one is to derive the following \emph{barotropic compressible fluid system} on $\Gamma_T$:
\begin{equation}\label{Eq116}
\begin{cases}
D_t \rho + ({\rm{div}}_\Gamma v) \rho = 0 & \text{ on }\Gamma_T,\\
\rho D_t v + {\rm{grad}}_\Gamma \mathfrak{p} + \mathfrak{p} H_\Gamma n =0 & \text{ on }\Gamma_T,\\
\mathfrak{p} = \mathfrak{p} ( \rho ) = \rho p' (\rho ) - p ( \rho ),
\end{cases} 
\end{equation}
and the \emph{tangential barotropic compressible fluid system} on $\Gamma_T$:
\begin{equation}\label{Eq117}
\begin{cases}
v \cdot n =0 & \text{ on }\Gamma_T,\\ 
D^\Gamma_t \rho + ({\rm{div}}_\Gamma v) \rho = 0 & \text{ on }\Gamma_T,\\
P_\Gamma \rho D^\Gamma_t v + {\rm{grad}}_\Gamma \mathfrak{p} =0 & \text{ on }\Gamma_T,\\
\mathfrak{p} = \mathfrak{p} ( \rho ) = \rho p' (\rho ) - p ( \rho ),
\end{cases} 
\end{equation}
where $p (\cdot)$ is a $C^1$ function. Note that if system \eqref{Eq116} satisfies \eqref{Eq114} then any solution to system \eqref{Eq116} satisfies that for $t_1 < t_2$,
\begin{equation}\label{Eq118}
\int_{\Gamma (t_2)} \frac{1}{2} \rho |v|^2 { \ }d \mathcal{H}^2_x = \int_{\Gamma (t_1)} \frac{1}{2} \rho |v|^2 { \ }d \mathcal{H}^2_x + \int_{t_1}^{t_2} \int_{\Gamma (\tau )}  ({\rm{div}}_\Gamma v ) \mathfrak{p} { \ }d \mathcal{H}^2_x d \tau .
\end{equation}
Note also that any solution to system \eqref{Eq117} satisfies \eqref{Eq118} if system \eqref{Eq117} satisfies \eqref{Eq114}.

\begin{remark}\label{rem11}
If we assume that $\sigma = \sigma (\rho , e)$ and $e = e (\rho , \theta )$, then in general system \eqref{eq11} with \eqref{eq15}-\eqref{eq17} is an overdetermined system for its initial value problem when the motion of $\Gamma (t)$ is given and we consider $(\rho, v , \theta , C)$ as unknown functions, while system \eqref{Eq113} with \eqref{Eq114}, \eqref{eq16}, \eqref{eq17} is not an overdetermined system for its initial value problem when the motion of $\Gamma (t)$ is given. Because the second expression of system \eqref{eq11} has six equations including the tangential and normal parts of the total velocity. For the same reason, system \eqref{Eq116} with \eqref{Eq114} is an overdetermined system for its initial value problem when the motion of $\Gamma (t)$ is given, and system \eqref{Eq117} with \eqref{Eq114} is not an overdetermined system for its initial value problem when the motion of $\Gamma (t)$ is given. When $v\cdot n =0$, we have to assume that $w \cdot n = 0$ since $u \cdot n =0$ in general.
\end{remark}

Let us explain our strategy for deriving four systems \eqref{eq11}, \eqref{Eq113}, \eqref{Eq116}, and \eqref{Eq117}. We first set the following \emph{generalized energy densities for compressible fluid on the evolving surface $\Gamma (t)$}.
\begin{assumption}[Energy densities for compressible fluid]\label{ass12}
\begin{multline*}
e_K = \frac{1}{2} \rho | v |^2,{ \ }e_P = p( \rho ),{ \ }e_D = \frac{1}{2} e_1 ( | D_\Gamma (v) |^2 )  + \frac{1}{2} e_2 ( | {\rm{div}}_\Gamma v |^2),\\
e_W = ({\rm{div}}_\Gamma v ) \sigma + \rho F \cdot v ,{ \ }e_{TD} = \frac{1}{2} e_3 ( | {\rm{grad}}_\Gamma \theta |^2), { \ }e_{GD} = \frac{1}{2} e_4 ( | {\rm{grad}}_\Gamma C |^2).
\end{multline*}
Here $p$, $e_1$, $e_2$, $e_3$, and $e_4$ are $C^1$-functions.
\end{assumption}
\noindent We call $e_K$ the \emph{kinetic energy}, $e_P$ the \emph{pressure potential}, $e_D$ the \emph{generalized energy density for the energy dissipation due to the viscosities}, $e_{W}$ the \emph{power density for the work done by the pressure and exterior force}, $e_{TD}$ the \emph{generalized energy density for the energy dissipation due to thermal diffusion}, and $e_{GD}$ the \emph{generalized energy density for the energy dissipation due to general diffusion}. Note that $e_D \neq \tilde{e}_D$ in general, where $\tilde{e}_D$ is the generalized density for the energy dissipation due to the viscosities defined by $\tilde{e}_D = e_1' (| D_\Gamma (v) |^2 ) |D_\Gamma (v)|^2 + e_2' (|{\rm{div}}_\Gamma v|^2 ) | {\rm{div}}_\Gamma v |^2$.

Secondly, we consider a mathematical validity of the generalized energy densities for compressible fluid on the evolving surface $\Gamma (t)$. We apply a flow map and the Riemannian metric induced by the flow map to show the invariance of our energy densities (see Section \ref{sect3}).

Thirdly, we derive several forces from a variation of some energies based on the generalized energy densities. We calculate a variation of our dissipation energies with respect to the velocity to have the viscous and diffusion terms of our fluid systems, and derive the nonlinear and pressure terms of our fluid systems to consider a variation of the action integral with respect to the flow map (see Section \ref{sect4}).

Finally, we apply both an energetic variational approach and the first law of thermodynamics to make mathematical models for compressible fluid flow on the evolving surface $\Gamma (t)$ with a boundary, and investigate the conservation and energy laws of our models (see Section \ref{sect5}). 

Let us state two key ideas of this paper. The first one is to generalize the energy densities for fluid. Using the energy densities for Newtonian fluid on a surface, Koba-Liu-Giga \cite{KLG17} and Koba \cite{K18} derived their fluid systems on an evolving closed surface. This paper generalizes the energy density for fluid on a surface to deal with non-Newtonian fluid and power law fluid. Note that our models \eqref{eq11} and \eqref{Eq113} includes their models in \cite{K18} and that our energy densities have a mathematical validity. Note also that Taylor's model \cite{Tay92} is a Newtonian fluid system on a closed surface. The second one is to make use of an energetic variational approach. An energetic variational approach is a method for deriving PDEs by applying the forces derived from a variation of energies. Gyarmati \cite{Gya70} used an energetic variational approach, which had been studied by Strutt \cite{Str73} and Onsager (\cite{Ons31a}, \cite{Ons31b}), to make several models for fluid dynamics in domains. Hyon-Kwak-Liu \cite{HKL10} applied an energetic variational approach to study complex fluid in domains. Koba-Sato \cite{KS17} used their energetic variational approach to derive their non-Newtonian fluid systems in domains. Koba-Liu-Giga \cite{KLG17} and Koba \cite{K18} improved the previous energetic variational approaches to derive their fluid systems on an evolving closed surface. This paper improves and polishes up their energetic variational approach in \cite{KLG17} and \cite{K18} to make a mathematical model for compressible fluid flow on an evolving surface with a boundary. 

Let us explain some derivations of incompressible fluid systems on a closed manifold. Arnol'd \cite{Arn66}, \cite{Arn97} applied the Lie group of diffeomorphisms to derive an inviscid incompressible fluid system on a manifold. See also Ebin-Marsden \cite{EM70}. Taylor \cite{Tay92} introduced their surface stretching tensor to make their viscous incompressible fluid system on a manifold. Mitsumatsu-Yano \cite{MY02} used their energetic variational approach to derive a viscous incompressible fluid system on a manifold. Arnaudon-Cruzeiro \cite{AC12} applied their stochastic variational approach to derive a viscous incompressible fluid system on a manifold. This paper considers compressible fluid flow on an evolving surface with a boundary.

Next we state some derivations of diffusion and fluid systems on an evolving closed surface. Dziuk-Elliott \cite{DE07} applied the surface transport theorem and their surface flux to make several diffusion systems on an evolving closed surface. Koba-Liu-Giga \cite{KLG17} applied an energetic variational approach and the Helmholtz-Weyl decomposition on a closed surface to derive their incompressible fluid systems on an evolving closed surface. Koba \cite{K18} studied compressible fluid flow on an evolving closed surface and derived the surface stress tensor determined by the Boussinesq-Scriven law from an energetic point of view. This paper derives compressible fluid systems on an evolving surface with a boundary, and investigates the natural boundary conditions for our fluid systems.

Now we state the history of the surface stress tensor determined by the Boussinesq-Scriven law. Boussinesq \cite{Bou13} first considered the existence of surface flow. Scriven \cite{Scr60} introduced their surface stress tensor to apply it to arbitrary surfaces. Slattery \cite{Sla64} studied some properties of the surface stress tensor determined by the Boussinesq-Scriven law. Bothe and Pr\"{u}ss \cite{BP12} used the Boussinesq-Scriven law to make a two-phase flow system with surface viscosity and surface tension. Koba \cite{K18} gave a mathematical validity of the Boussinesq-Scriven law from an energetic point of view. This paper derives the generalized surface stress tensor including the Boussinesq-Scriven law from an energetic point of view.

This paper provides a mathematical modeling of compressible fluid flow on an evolving with a boundary. We refer the reader to Serrin \cite{Ser59} for mathematical derivations of fluid systems in domains, Gatignol-Prud'homme \cite{GP01} and Slattery-Sagis-Oh \cite{SSO07} for two-phase flow systems with interfacial phenomena, and Gyarmati \cite{Gya70} and Gurtin-Fried-Anand \cite{GFA10} for the theory of thermodynamics.
The outline of this paper is as follows: In Section \ref{sect2} we state the definition of an evolving surface with a boundary and the main results of this paper. In Section \ref{sect3} we study the representation of our energy densities to show a mathematical validity of the energy densities. In Section \ref{sect4} we calculate variations of our dissipation energies and action integral to derive several forces from our energies. In Section \ref{sect5} we apply an energetic variational approach and the first law of thermodynamics to make mathematical models for compressible fluid flow on an evolving surface with a boundary, and investigate the conservation and energy laws of our fluid systems.

\section{Main Results}\label{sect2}

Let us first introduce a bounded domain with a boundary, a surface with a boundary, and an evolving surface with a boundary. Then we define fundamental notations. Finally, we state the main results of this paper.
\begin{definition}[Bounded domain with a boundary]\label{def21}{ \ }
Let $U \subset \mathbb{R}^2$ be a bounded domain. We call $U$ a \emph{bounded domain with a smooth boundary} if the boundary $\partial U$ of $U$ can be written as
\begin{equation*}
\partial U = \partial U^1 \cup \partial U^2 \cup \cdots \cup \partial U^M \text{ for some } M \in \mathbb{N}.
\end{equation*}
Here
\begin{equation*}
\partial U^m = \{ X = { }^t ( X_1 , X_2 ) \in \mathbb{R}^2; { \ } X_1 = p_m ( r ), { \ }X_2 = q_m ( r ),{ \ } r \in [a_m , b_m ] \},
\end{equation*}
where $m \in \{1,2, \cdots, M\}$, $a_m , b_m \in \mathbb{R}$, $p_m, q_m \in C^2([ a_m , b_m ])$, satisfying the following three properties:\\ 
$(\mathrm{i})$ $(p_{m} (b_m ) , q_{m} ( b_m)) = (p_{m + 1} (a_{m+1} ) , q_{m+1} (a_{m+1}))$, $(p'_{m} (b_m ) , q'_{m} ( b_m)) =$\\ 
$ (p'_{m + 1} (a_{m+1} ) , q'_{m+1} (a_{m+1}))$, $(p^{(2)}_{m} (b_m ) , q^{(2)}_{m} ( b_m)) = (p^{(2)}_{m + 1} (a_{m+1} ) , q^{(2)}_{m+1} (a_{m+1}))$ for each $m \in \{1 ,2 , \cdots, M - 1 \}$.\\
$(\mathrm{ii})$ $(p_{M} (b_M ) , q_{M} (b_M)) = (p_1 (a_1 ) , q_1 (a_1))$, $(p'_{M} (b_M ) , q'_{M} (b_M)) = (p'_1 (a_1 ) , q'_1 (a_1))$, $(p^{(2)}_{M} (b_M ) , q^{(2)}_{M} (b_M)) = (p^{(2)}_1 (a_1 ) , q^{(2)}_1 (a_1))$.\\
$(\mathrm{iii})$ For each $m \in \{1 ,2 , \cdots, M \}$
\begin{equation*}
\left( \frac{d p_m }{d r} ( r ) \right)^2 + \left( \frac{d q_m}{d r} ( r ) \right)^2 > 0 \text{ for all } r \in [ a_m , b_m ].
\end{equation*}
Here $f' = f'(r) = {df}/{dr}$ and $f^{(2)} = f^{(2)}(r) ={d^2f}/{d r^2}$. 
\end{definition}
\begin{definition}[Surface with a boundary]\label{def22}
Let $\Gamma_0 \subset \mathbb{R}^3$ be a set. We call $\Gamma_0$ a \emph{surface with a smooth boundary} if there are bounded domain $U$ with a smooth boundary and $\Phi = { }^t ( \Phi_1 , \Phi_2 , \Phi_3) \in [ C^2 ( \overline{U})]^3$ satisfying the properties as in Definition \ref{def21} and the following four properties:\\
$(\mathrm{i})$ The set $\Gamma_0$ can be written as
\begin{equation*}
\Gamma_0 = \{ \xi = { }^t (\xi_1, \xi_2, \xi_3) \in \mathbb{R}^3; { \ }\xi = \Phi (X) , { \ }X \in U \}.
\end{equation*}
$(\mathrm{ii})$ The map
\begin{equation*}
\Phi : U \to \Gamma_0 \text{ is bijective}.
\end{equation*}
$(\mathrm{iii})$ The boundary $\partial \Gamma_0$ of $\Gamma_0$ can be written as
\begin{equation*}
\partial \Gamma_0 = \partial \Gamma_0^1 \cup \partial \Gamma_0^2 \cup \cdots \cup \partial \Gamma_0^M,
\end{equation*}
where
\begin{equation*}
\partial \Gamma_0^m = \{ \xi \in \mathbb{R}^3; { \ } \xi = \Phi (X), { \ }X \in \partial U^m \}.
\end{equation*}
$(\mathrm{iv})$ There is $\lambda_1 >0$ such that for every $X \in \overline{U}$,
\begin{multline*}
\left| \frac{\partial \Phi}{\partial X_1} \times \frac{\partial \Phi}{\partial X_2} \right|^2 = \left( \frac{\partial \Phi_2}{\partial X_1} \frac{\partial \Phi_3}{\partial X_2} - \frac{\partial \Phi_2}{\partial X_2} \frac{\partial \Phi_3}{\partial X_1} \right)^2 + \left( \frac{\partial \Phi_1}{\partial X_1} \frac{\partial \Phi_3}{\partial X_2} - \frac{\partial \Phi_1}{\partial X_2} \frac{\partial \Phi_3}{\partial X_1} \right)^2\\
 + \left( \frac{\partial \Phi_1}{\partial X_1} \frac{\partial \Phi_2}{\partial X_2} - \frac{\partial \Phi_1}{\partial X_2} \frac{\partial \Phi_2}{\partial X_1} \right)^2 \geq \lambda_1.
\end{multline*}
Here
\begin{equation*}
C^2 ( \overline{U}) = \{ f : \overline{U} \to \mathbb{R}; { \ }f = \mathcal{F}|_{\overline{U} } \text{ for some } \mathcal{F} \in C^2 ( \mathbb{R}^2)\}.
\end{equation*}
\end{definition}

\begin{definition}[Evolving surface with a boundary and flow maps]\label{def23}{ \ }\\
Let $\Gamma_0 \subset \mathbb{R}^3$ be a surface with a smooth boundary, and let $U \subset \mathbb{R}^2$ and $\Phi = { }^t ( \Phi_1 , \Phi_2 , \Phi_3 ) \in [ C^2 ( \overline{U})]^3$ be the bounded domain with a smooth boundary and the functions satisfying the properties as in Definitions \ref{def21} and \ref{def22}. For $0 \leq t < T$, let $\Gamma (t) \subset \mathbb{R}^3$ be a surface with a smooth boundary. We call $\Gamma (t) (= \{ \Gamma (t) \}_{0 \leq t <T})$ an \emph{evolving surface with a smooth boundary} if there is $\widetilde{x} = { }^t ( \widetilde{x}_1 , \widetilde{x}_2 , \widetilde{x}_3) \in [ C^3 ( \overline{\Gamma_0} \times [0,T ))]^3$ satisfying the following six properties:\\
$(\mathrm{i})$ For every $\xi \in \overline{\Gamma_0}$
\begin{equation*}
\widetilde{x} ( \xi , 0) = \xi .
\end{equation*}
$(\mathrm{ii})$ For every $0 < t < T$, $\Gamma (t)$ can be written as
\begin{equation*}
\Gamma (t) = \{ x = { }^t (x_1, x_2,x_3) \in \mathbb{R}^3; { \ }x = \widetilde{x}( \xi , t) ,{ \ } \xi \in \Gamma_0 \}.
\end{equation*}
$(\mathrm{iii})$ For each $0 < t < T$,
\begin{equation*}
\widetilde{x}( \cdot , t) : \Gamma_0 \to \Gamma (t) \text{ is bijective}.
\end{equation*}
$(\mathrm{iv})$ For every $0 < t < T$, the boundary $\partial \Gamma (t)$ of $\Gamma (t)$ can be written as
\begin{equation*}
\partial \Gamma (t) = \partial \Gamma^1 (t) \cup \partial \Gamma^2 (t) \cup \cdots \cup \partial \Gamma^M (t),
\end{equation*}
where
\begin{equation*}
\partial \Gamma^m (t) = \{ x \in \mathbb{R}^3; { \ } x = \widetilde{x} (\xi, t ),{ \ } \xi \in \partial \Gamma_0^m \}.
\end{equation*}
$(\mathrm{v})$ Set $\widehat{x} (X,t) = \widetilde{x} ( \Phi (X), t )$. There is $\lambda_2 >0$ such that for every $0 \leq t < T$ and $X \in \overline{U}$,
\begin{multline*}
\left| \frac{\partial \widehat{x}}{\partial X_1} \times \frac{\partial \widehat{x}}{\partial X_2} \right|^2 = \left( \frac{\partial \widehat{x}_2}{\partial X_1} \frac{\partial \widehat{x}_3}{\partial X_2} - \frac{\partial \widehat{x}_2}{\partial X_2} \frac{\partial \widehat{x}_3}{\partial X_1} \right)^2 + \left( \frac{\partial \widehat{x}_1}{\partial X_1} \frac{\partial \widehat{x}_3}{\partial X_2} - \frac{\partial \widehat{x}_1}{\partial X_2} \frac{\partial \widehat{x}_3}{\partial X_1} \right)^2\\
 + \left( \frac{\partial \widehat{x}_1}{\partial X_1} \frac{\partial \widehat{x}_2}{\partial X_2} - \frac{\partial \widehat{x}_1}{\partial X_2} \frac{\partial \widehat{x}_2}{\partial X_1} \right)^2 \geq \lambda_2.
\end{multline*}
$(\mathrm{vi})$ There exists $v = v (x,t) = { }^t (v_1, v_2 , v_3) \in C^2 ( \bigcup_{0 \leq t < T} \{ \overline{\Gamma (t)} \times \{ t \} \} )$ such that for every $0 \leq t < T$ and $\xi \in \overline{\Gamma_0}$
\begin{equation*}
\widetilde{x}_t (\xi , t ) = v ( \widetilde{x} ( \xi , t) , t ).
\end{equation*}
Here
\begin{multline*}
C^2 \bigg( \bigcup_{0 \leq t < T} \{ \overline{\Gamma (t)} \times \{ t \} \} \bigg) := \bigg\{ f : \bigcup_{0 \leq t < T} \{ \overline{\Gamma (t)} \times \{ t \} \} \to \mathbb{R};\\
 { \ }f = \mathcal{F}|_{\bigcup_{0 \leq t < T} \{ \overline{\Gamma (t)} \times \{ t \} \}} \text{ for some } \mathcal{F} \in C^2 ( \mathbb{R}^4) \bigg\}.
\end{multline*}
We call $\widetilde{x} = \widetilde{x} (\xi, t)$ a \emph{flow map on} $\Gamma (t)$, and $v = v ( x , t ) = v (\widetilde{x}(\xi, t), t)$ the \emph{velocity determined by the flow map} $\widetilde{x}$. We also call $\widehat{x} = \widehat{x} ( X , t)$ a \emph{flow map on} $\Gamma (t)$ since $\partial \widehat{x}/ \partial t (X,t) = v ( \widehat{x} (X , t ) , t)$. Note that $\widehat{x} (X,t) = \widetilde{x} ( \Phi (X) , t) = \widetilde{x} ( \xi , t) $.
\end{definition}
Let us explain the conventions used in this paper. We use the italic characters $i,j, k, \ell$ as $1,2,3$, and the Greek characters $\alpha , \beta, \zeta , \eta$ as $1,2$. Moreover, we often use the following Einstein summation convention: $c_j d_{ij} = \sum_{j = 1}^3 c_j d_{i j}$, $c_\alpha d^{\alpha \beta} = \sum_{\alpha = 1}^2 c_\alpha d^{\alpha \beta}$. The symbol $[\mathcal{M}]_{ \mathfrak{i} \mathfrak{j}}$ denotes the $( \mathfrak{i} , \mathfrak{j})$-th component of a matrix $\mathcal{M}$. For each $k \times k$-matrix $\mathcal{M}$, $ | \mathcal{M} |^2 := \mathcal{M} : \mathcal{M} = \sum_{\mathfrak{i}, \mathfrak{j} =1}^k | [\mathcal{M}]_{\mathfrak{i} \mathfrak{j}}|^2$.

We now introduce function spaces and notations. Let $\Gamma (t) (= \{ \Gamma (t) \}_{0 \leq t < T})$ be an evolving surface with a smooth boundary, and let $\widetilde{x} = \widetilde{x} ( \xi , t) = { }^t ( \widetilde{x}_1, \widetilde{x}_2 , \widetilde{x}_3)$, $\widehat{x} = \widehat{x} (X , t) = { }^t ( \widehat{x}_1 , \widehat{x}_2 , \widehat{x}_3)$, and $v = v (x,t) = { }^t (v_1 , v_2 , v_3)$ be two flow maps on $\Gamma (t)$, and the velocity determined by the flow map $\widetilde{x}$. Throughout this paper we assume that $v$ is the total velocity. The symbol $n = n (x, t) = { }^t ( n_1 , n_2 , n_3)$ and $\nu = \nu (x,t) = { }^t ( \nu_1 , \nu_2 , \nu_3 )$ denote the unit outer normal vector at $ x \in \overline{\Gamma (t)}$ and the unit outer co-normal vector at $x \in \partial \Gamma (t)$, respectively. Notice that $\nu = \nu (x,t)$ exists for almost all $x \in \partial \Gamma (t)$ since $\partial \Gamma (t)$ is a smooth boundary. Set
\begin{align*}
\Gamma_T & = \bigcup_{0 <  t  < T}\{ \Gamma (t) \times \{ t \} \},{ \ }\partial \Gamma_T = \bigcup_{0 <  t  < T}\{ \partial \Gamma (t) \times \{ t \} \},\\
\overline{\Gamma_T} & = \bigcup_{0 \leq t  < T}\{ \overline{\Gamma (t)} \times \{ t \} \}.
\end{align*}
For each $k \in \mathbb{N} \cup \{ 0 , \infty \}$,
\begin{align*}
C^k ( \Gamma (t) ) & := \{ f : \Gamma (t) \to \mathbb{R}; { \ }f = \mathcal{F}|_{\Gamma (t)} \text{ for some } \mathcal{F} \in C^k ( \mathbb{R}^3) \},\\
C^k ( \overline{\Gamma (t)}) & := \{ f : \overline{\Gamma (t)} \to \mathbb{R}; { \ }f = \mathcal{F}|_{\overline{\Gamma (t)}} \text{ for some } \mathcal{F} \in C^k ( \mathbb{R}^3) \},\\
C_0^k (\Gamma (t)) & := \{ f \in C^k (\Gamma (t)); { \ }\text{supp} f \text{ does not intersect } \partial \Gamma (t) \},\\
C^k ( \overline{ \Gamma_T} ) & := \{ f: \overline{\Gamma_T} \to \mathbb{R}; { \ } f = \mathcal{F}|_{\overline{\Gamma_T}} \text{ for some } \mathcal{F} \in C^k ( \mathbb{R}^4) \},\\
C^k ( \partial \Gamma (t) ) & := \{ f:\partial \Gamma (t) \to \mathbb{R}; { \ } f = \mathcal{F}|_{\partial \Gamma (t)} \text{ for some } \mathcal{F} \in C^k ( \mathbb{R}^3) \},\\
C^k ( \partial \Gamma_T ) & := \{ f:\partial \Gamma_T \to \mathbb{R}; { \ } f = \mathcal{F}|_{\partial \Gamma_T} \text{ for some } \mathcal{F} \in C^k ( \mathbb{R}^4) \}.
\end{align*}
Fix $j \in \{ 1 , 2 , 3 \}$. For $f \in C^1 ( \overline{\Gamma (t)})$ and $\varphi = { }^t ( \varphi_1, \varphi_2 , \varphi_3) \in C^1 (\overline{\Gamma (t)})$,
\begin{align*}
\partial_j^\Gamma f & := (\delta_{i j} - n_i n_j ) \partial_i f = \sum_{i=1}^3 (\delta_{i j} - n_i n_j ) \partial_i f,\\ 
\nabla_\Gamma & :=  { }^t ( \partial_1^\Gamma , \partial_2^\Gamma , \partial_3^\Gamma ),\\
{\rm{grad}}_\Gamma f & := \nabla_\Gamma f = { }^t (\partial_1^\Gamma f, \partial_2^\Gamma f, \partial_3^\Gamma f) ,\\
{\rm{div}}_\Gamma \varphi & := \nabla_\Gamma \cdot \varphi = \partial_1^\Gamma \varphi_1 + \partial_2^\Gamma \varphi_2 + \partial_3^\Gamma \varphi_3,\\
\Delta_\Gamma f & := {\rm{div}}_\Gamma ({\rm{grad}}_\Gamma f ) = (\partial_1^\Gamma )^2 f +  (\partial_2^\Gamma )^2 f +  (\partial_3^\Gamma )^2 f.
\end{align*}
Here $\delta_{ij}$ denotes the Kronecker delta. Moreover, for $g \in C^1 ( \partial \Gamma (t) )$,
\begin{equation*}
\frac{\partial g}{\partial \nu} := ( \nu \cdot \nabla_\Gamma ) g . 
\end{equation*}
The symbol $P_\Gamma = P_\Gamma (x ,t)$ and $H_\Gamma = H_\Gamma (x , t )$ denote the \emph{orthogonal projection to a tangent space} and the \emph{mean curvature in the direction} $n$ defined by
\begin{align*}
P_\Gamma & = I_{3 \times 3} - n \otimes n,\\
{ \ }H_\Gamma & = - {\rm{div}}_\Gamma n .
\end{align*}
Note that $[P_\Gamma]_{i j} = \delta_{i j} - n_i n_j$, $P_\Gamma n = { }^t (0,0,0)$, $P_\Gamma^2 = P_\Gamma$, $H_\Gamma = - (\partial_1^\Gamma n_1 + \partial_2^\Gamma n_2 + \partial_3^\Gamma n_3)$ and that ${\rm{div}}_\Gamma (f P_\Gamma ) = {\rm{grad}}_\Gamma f + f H_\Gamma n$. Note also that $(\nu \cdot \nabla )f = ( \nu \cdot \nabla_\Gamma ) f $ and $H_\Gamma = - (\partial_1 n_1 + \partial_2 n_2 + \partial_3 n_3)$ since $\nu \cdot n = 0$ and $| n | = | \nu | = 1$.

Next we introduce the Riemannian metric determined by the flow map $\widehat{x} = \widehat{x} ( X , t )$. For every $0 \leq t < T$ and $X \in \overline{U}$,
\begin{multline*}
\mathfrak{g}_1 = \mathfrak{g}_1 (X,t) := \frac{\partial \widehat{x}}{\partial X_1},{ \ }\mathfrak{g}_2 = \mathfrak{g}_2 (X,t) := \frac{\partial \widehat{x}}{\partial X_2},{ \ }\mathfrak{g}_{\alpha \beta} := \mathfrak{g}_\alpha \cdot \mathfrak{g}_\beta = \frac{\partial \widehat{x}_i}{\partial X_\alpha} \frac{\partial \widehat{x}_i}{\partial X_\beta},\\
( \mathfrak{g}^{ \alpha \beta })_{2 \times 2} := ( \mathfrak{g}_{\alpha \beta })_{2 \times 2}^{-1} = \frac{1}{\mathfrak{g}_{11} \mathfrak{g}_{22} - \mathfrak{g}_{12} \mathfrak{g}_{21} } 
\begin{pmatrix}
\mathfrak{g}_{22} & - \mathfrak{g}_{21}\\
- \mathfrak{g}_{12} & \mathfrak{g}_{11}
\end{pmatrix},{ \ }\mathfrak{g}^\alpha := \mathfrak{g}^{\alpha \beta} \mathfrak{g}_\beta,\\
\acute{ \mathfrak{g} }_{\alpha } :=  \frac{d}{d t} \mathfrak{g}_\alpha, { \ }\acute{\mathfrak{g}}_{\alpha \beta} := \frac{d }{d t} \mathfrak{g}_{\alpha \beta }  , { \ }  \mathcal{G} = \mathcal{G}(X,t) := | \mathfrak{g}_1 \times \mathfrak{g}_2 |^2 = \mathfrak{g}_{11} \mathfrak{g}_{22} - \mathfrak{g}_{12} \mathfrak{g}_{21}. 
\end{multline*}
By Definition \ref{def23}, we see that for every $0 \leq t < T$ and $X \in \overline{U}$,
\begin{multline}\label{eq21}
\mathcal{G} = | \mathfrak{g}_1 \times \mathfrak{g}_2 |^2 =\mathfrak{g}_{11} \mathfrak{g}_{22} - \mathfrak{g}_{12} \mathfrak{g}_{21} = \left( \frac{\partial \widehat{x}_2}{\partial X_1} \frac{\partial \widehat{x}_3}{\partial X_2} - \frac{\partial \widehat{x}_2}{\partial X_2} \frac{\partial \widehat{x}_3}{\partial X_1} \right)^2\\
 + \left( \frac{\partial \widehat{x}_1}{\partial X_1} \frac{\partial \widehat{x}_3}{\partial X_2} - \frac{\partial \widehat{x}_1}{\partial X_2} \frac{\partial \widehat{x}_3}{\partial X_1} \right)^2   + \left( \frac{\partial \widehat{x}_1}{\partial X_1} \frac{\partial \widehat{x}_2}{\partial X_2} - \frac{\partial \widehat{x}_1}{\partial X_2} \frac{\partial \widehat{x}_2}{\partial X_1} \right)^2 \geq \lambda_2 > 0.
\end{multline}
See Jost \cite{Jos11} and Ciarlet \cite{Cia05} for differential geometry and the Riemannian manifold.

We now state the main results of this paper. Throughout this paper we assume that $\rho$, $v$, $\sigma$, $\theta$, $e$, $F$, and $C$ are $C^2$-functions on $\overline{\Gamma_T}$. Let us first study our energy densities (see Assumption \ref{ass12}).
\begin{theorem}[Representation of energy densities]\label{thm24}
Fix $t$. Then
\begin{align}
\int_{\Gamma (t)} {\rm{div}}_\Gamma v  { \ } d \mathcal{H}^2_x & = \int_U \frac{d \sqrt{\mathcal{G}}  }{d t}  { \ }d X,\label{eq22}\\
\int_{\Gamma (t)} ({\rm{div}}_\Gamma v ) \sigma { \ } d \mathcal{H}^2_x & = \int_U \left( \frac{1}{2} \acute{\mathfrak{g}}_{ \alpha \beta } \mathfrak{g}^{ \alpha \beta } \right) \sigma  \sqrt{ \mathcal{G} }  { \ }d X,\label{eq23}\\
\int_{\Gamma (t)} \frac{1}{2} e_1 ( |D_\Gamma (v) |^2) { \ }d \mathcal{H}^2_x & = \int_{U} \frac{1}{2} e_1 \left( \frac{1}{4} \acute{ \mathfrak{g} }_{\alpha \beta} \acute{ \mathfrak{g} }_{\zeta \eta} \mathfrak{g}^{\alpha \zeta} \mathfrak{g}^{\beta \eta } \right) \sqrt{ \mathcal{G} } { \ }d X,\label{eq24}\\
\int_{\Gamma (t)} \frac{1}{2} e_2 ( | {\rm{div}}_\Gamma v |^2) { \ }d \mathcal{H}^2_x & = \int_{U} \frac{1}{2} e_2 \left(\frac{1}{4} \acute{ \mathfrak{g} }_{\alpha \beta} \acute{ \mathfrak{g} }_{\zeta \eta} \mathfrak{g}^{\alpha \beta} \mathfrak{g}^{\zeta \eta } \right) \sqrt{ \mathcal{G}} { \ }d X,\label{eq25}\\
\int_{\Gamma (t)} \frac{1}{2} e_3 ( | {\rm{grad}}_\Gamma \theta |^2) { \ }d \mathcal{H}^2_x & = \int_{U} \frac{1}{2} e_3 \left( \mathfrak{g}^{\alpha \beta } \frac{\partial \widehat{\theta} }{\partial X_\alpha} \frac{\partial \widehat{\theta} }{\partial X_\beta} \right) \sqrt{ \mathcal{G} } { \ }d X,\label{eq26}\\
\int_{\Gamma (t)} \frac{1}{2} e_4 ( | {\rm{grad}}_\Gamma C |^2) { \ }d \mathcal{H}^2_x & = \int_{U} \frac{1}{2} e_4 \left( \mathfrak{g}^{\alpha \beta } \frac{\partial \widehat{C} }{\partial X_\alpha} \frac{\partial \widehat{C} }{\partial X_\beta} \right) \sqrt{ \mathcal{G} } { \ }d X.\label{eq27}
\end{align}
Here $\widehat{\theta} = \widehat{\theta} (X,t) := \theta ( \widehat{x} (X,t) , t)$ and  $\widehat{C} = \widehat{C} (X,t) := C ( \widehat{x} (X,t) , t)$. 
\end{theorem}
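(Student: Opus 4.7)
The strategy is to pull every surface integral back to the parameter domain $U$ via the flow map $\widehat{x}:U\to\Gamma(t)$, whose Jacobian is $\sqrt{\mathcal{G}(X,t)}$ by Definition 2.3 and inequality (2.1). Thus for any continuous $f$ on $\overline{\Gamma(t)}$,
\begin{equation*}
\int_{\Gamma(t)} f\,d\mathcal{H}^2_x = \int_U f(\widehat{x}(X,t))\sqrt{\mathcal{G}(X,t)}\,dX.
\end{equation*}
The master identity driving (2.22) and (2.23) is
\begin{equation*}
(\mathrm{div}_\Gamma v)\circ\widehat{x} \;=\; \frac{1}{\sqrt{\mathcal{G}}}\frac{d\sqrt{\mathcal{G}}}{dt} \;=\; \tfrac{1}{2}\,\mathfrak{g}^{\alpha\beta}\acute{\mathfrak{g}}_{\alpha\beta},
\end{equation*}
which I would obtain by combining Jacobi's formula $\frac{d}{dt}\det(\mathfrak{g}_{\alpha\beta}) = \mathcal{G}\,\mathfrak{g}^{\alpha\beta}\acute{\mathfrak{g}}_{\alpha\beta}$ with the standard interpretation of $\mathrm{div}_\Gamma v$ as the logarithmic expansion rate of the area element along the flow (a direct computation using $[P_\Gamma]_{ij}\partial_i v_j$ in local coordinates). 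Inserting this identity into the change-of-variables formula gives (2.22) directly and (2.23) upon multiplying by $\sigma$.

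For (2.24) and (2.25) I first compute the time derivative of the induced metric. Differentiating $\mathfrak{g}_{\alpha\beta}=\mathfrak{g}_\alpha\cdot\mathfrak{g}_\beta$ and using $\acute{\mathfrak{g}}_\alpha = \partial_{X_\alpha}(v\circ\widehat{x}) = (\partial_j v)(\partial_{X_\alpha}\widehat{x}_j)$, symmetrization gives
\begin{equation*}
\acute{\mathfrak{g}}_{\alpha\beta} \;=\; 2\,(\mathfrak{g}_\alpha)_i(\mathfrak{g}_\beta)_j D(v)_{ij}.
\end{equation*}
Writing the tangential projection in the tangent basis as $[P_\Gamma]_{ij}=\mathfrak{g}^{\alpha\beta}(\mathfrak{g}_\alpha)_i(\mathfrak{g}_\beta)_j$ and substituting yields
\begin{equation*}
[D_\Gamma(v)]_{ij} \;=\; \tfrac{1}{2}\,\mathfrak{g}^{\alpha\beta}\mathfrak{g}^{\zeta\eta}\acute{\mathfrak{g}}_{\beta\zeta}(\mathfrak{g}_\alpha)_i(\mathfrak{g}_\eta)_j.
\end{equation*}
Taking the Frobenius norm and contracting the metric pairings via $\mathfrak{g}_\alpha\cdot\mathfrak{g}_{\alpha'}=\mathfrak{g}_{\alpha\alpha'}$ and $\mathfrak{g}^{\alpha\beta}\mathfrak{g}_{\alpha\alpha'}=\delta^\beta_{\alpha'}$ collapses the expression to $|D_\Gamma(v)|^2 = \tfrac{1}{4}\acute{\mathfrak{g}}_{\alpha\beta}\acute{\mathfrak{g}}_{\zeta\eta}\mathfrak{g}^{\alpha\zeta}\mathfrak{g}^{\beta\eta}$, which I read as the intended right-hand side of (2.24) (the only contraction pattern consistent with Einstein convention). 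Squaring the master identity from the first paragraph gives (2.25).

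For (2.26) and (2.27), the key observation is that $\mathrm{grad}_\Gamma f$ is the tangential part of $\nabla f$, so in the dual basis
\begin{equation*}
(\mathrm{grad}_\Gamma f)\circ\widehat{x} \;=\; \mathfrak{g}^{\alpha\beta}\bigl(\partial_{X_\alpha}\widehat{f}\bigr)\mathfrak{g}_\beta,\qquad \widehat{f}:=f\circ\widehat{x};
\end{equation*}
both sides are tangential and agree upon pairing with each $\mathfrak{g}_\gamma$, yielding $\partial_{X_\gamma}\widehat{f}$ by the chain rule. Squaring and contracting with $\mathfrak{g}_\beta\cdot\mathfrak{g}_{\beta'}=\mathfrak{g}_{\beta\beta'}$ produces $|\mathrm{grad}_\Gamma f|^2 = \mathfrak{g}^{\alpha\beta}(\partial_{X_\alpha}\widehat{f})(\partial_{X_\beta}\widehat{f})$; specializing to $f=\theta$ and $f=C$ gives (2.26) and (2.27). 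The only delicate step is the bookkeeping for (2.24), where four inverse-metric factors must be contracted against two metric factors in the correct order; the remaining five identities follow almost mechanically from the change-of-variables formula together with the two formulas $(\mathrm{div}_\Gamma v)\circ\widehat{x}=\tfrac{1}{2}\mathfrak{g}^{\alpha\beta}\acute{\mathfrak{g}}_{\alpha\beta}$ and $(\mathrm{grad}_\Gamma f)\circ\widehat{x}=\mathfrak{g}^{\alpha\beta}(\partial_{X_\alpha}\widehat{f})\mathfrak{g}_\beta$.
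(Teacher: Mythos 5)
Your proposal is correct and follows essentially the same route as the paper: pull every integral back to $U$ with Jacobian $\sqrt{\mathcal{G}}$, express $P_\Gamma$, $\mathrm{grad}_\Gamma$, and $\mathrm{div}_\Gamma$ through the induced metric, compute $\acute{\mathfrak{g}}_{\alpha\beta}$ by the chain rule, and contract, and your reading of the contraction in (2.24) as $\mathfrak{g}^{\alpha\zeta}\mathfrak{g}^{\beta\eta}$ is exactly what the paper's own computation produces (the $\mathfrak{g}^{\alpha\eta}\mathfrak{g}^{\beta\eta}$ in the statement is a typo). The only cosmetic difference is that you state the key identities pointwise, whereas the paper packages them as integral identities in its Lemma 3.1 and derives the projection formula from the explicit cross-product representation of $n$.
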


\begin{remark}\label{rem25}
From Koba-Sato \cite[Theorem 2.1]{KS17} and Theorem \ref{thm24} we see that our energy densities are candidates of the energy densities for compressible fluid flow on the evolving surface $\Gamma (t)$ since our energy densities are invariant throughout the Riemannian metric determined by flow maps.
\end{remark}

Next we introduce the surface divergence theorem to calculate a variation of our dissipation energies and work. 
\begin{proposition}[Divergence theorem on surface with boundary]\label{prop25}{ \ }\\
For every $\varphi = { }^t ( \varphi_1 , \varphi_2 , \varphi_3 ) \in C^1 ( \overline{ \Gamma (t)})$,
\begin{equation}
\int_{\Gamma (t)} {\rm{div}}_\Gamma \varphi { \ }d \mathcal{H}^2_x = - \int_{\Gamma (t)} H_\Gamma (n \cdot \varphi) { \ } d \mathcal{H}^2_x + \int_{\partial \Gamma (t)} \nu \cdot \varphi { \ } d \mathcal{H}^1_x.\label{eq28}
\end{equation}
Here $\nu = \nu (x,t) = { }^t (\nu_1 , \nu_2 , \nu_3 )$ denotes the unit outer co-normal vector to $\partial \Gamma (t)$, i.e., $\nu \cdot \nu = 1$ and $\nu \cdot n = 0$ on $\partial \Gamma (t)$.
\end{proposition}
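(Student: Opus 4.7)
The plan is to split $\varphi$ into its tangential and normal components relative to $\Gamma(t)$ and treat each piece separately. Set $\varphi^T := P_\Gamma \varphi$, so $\varphi = \varphi^T + (n \cdot \varphi) n$. Since $\nu \cdot n = 0$ on $\partial \Gamma(t)$, the boundary integrand satisfies $\nu \cdot \varphi = \nu \cdot \varphi^T$, so \eqref{eq28} is equivalent to the pointwise identity
\begin{equation*}
{\rm{div}}_\Gamma \bigl( (n \cdot \varphi) n \bigr) = - H_\Gamma (n \cdot \varphi)
\end{equation*}
combined with the tangential divergence theorem
\begin{equation*}
\int_{\Gamma(t)} {\rm{div}}_\Gamma \varphi^T \, d\mathcal{H}^2_x = \int_{\partial \Gamma(t)} \nu \cdot \varphi^T \, d\mathcal{H}^1_x.
\end{equation*}

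The first identity is algebraic. Setting $f := n \cdot \varphi$ and applying the Leibniz rule for $\partial_j^\Gamma$, I get ${\rm{div}}_\Gamma (f n) = n \cdot {\rm{grad}}_\Gamma f + f\, {\rm{div}}_\Gamma n$. The first term vanishes because $n_j \partial_j^\Gamma f = n_j (\delta_{ij} - n_i n_j) \partial_i f = (n_i - n_i |n|^2) \partial_i f = 0$, and the second equals $-f H_\Gamma$ by the very definition $H_\Gamma = - {\rm{div}}_\Gamma n$.

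For the tangential divergence theorem, I would pull back to the parameter domain $U$ via $\widehat{x}(\cdot, t) : U \to \Gamma(t)$. Expanding $\varphi^T = \widehat{\varphi}^\alpha \mathfrak{g}_\alpha$ in the covariant basis and using $\partial_j^\Gamma = \partial_j - n_j (n \cdot \nabla)$ together with the chain rule and $\mathfrak{g}^{\alpha \beta} \mathfrak{g}_{\beta \zeta} = \delta^\alpha_\zeta$, one obtains the standard coordinate formula
\begin{equation*}
{\rm{div}}_\Gamma \varphi^T = \frac{1}{\sqrt{\mathcal{G}}} \frac{\partial}{\partial X_\alpha} \bigl( \sqrt{\mathcal{G}}\, \widehat{\varphi}^\alpha \bigr),
\end{equation*}
while $d\mathcal{H}^2_x = \sqrt{\mathcal{G}}\, dX$. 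Therefore
\begin{equation*}
\int_{\Gamma(t)} {\rm{div}}_\Gamma \varphi^T \, d\mathcal{H}^2_x = \int_U \frac{\partial}{\partial X_\alpha} \bigl( \sqrt{\mathcal{G}}\, \widehat{\varphi}^\alpha \bigr) \, dX,
\end{equation*}
and the classical Gauss--Green theorem on the piecewise-Lipschitz planar domain $U$ converts the right-hand side to a line integral over $\partial U$.

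The main obstacle is the last matching step: identifying the planar boundary integral on $\partial U$ with $\int_{\partial \Gamma(t)} \nu \cdot \varphi^T \, d\mathcal{H}^1_x$. This requires expressing the unit outer conormal $\nu$ in the basis $\{ \mathfrak{g}_1, \mathfrak{g}_2 \}$ and showing that its components, multiplied by the arc-length Jacobian between $\partial U^m$ and $\partial \Gamma^m(t)$, produce exactly $\sqrt{\mathcal{G}}\, n^U_\alpha$, where $n^U$ is the unit outer normal to $U$. Because $\partial U$ is only piecewise Lipschitz, the calculation must be carried out on each smooth arc $\partial U^m$ and then summed over $m$; the $C^2$-regularity of $\widehat{x}$ from Definition~\ref{def23}, together with condition (v) that keeps $\mathcal{G}$ bounded away from zero, ensures each piece reduces to a routine parametrized-curve computation.
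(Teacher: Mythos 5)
The paper does not actually prove Proposition \ref{prop25}: it invokes Simon \cite{Sim83} for smooth boundaries and Koba \cite{K19} for the piecewise Lipschitz case, so there is no internal proof to compare yours against. Your outline is the standard pullback argument, and everything you do carry out is correct: the splitting $\varphi = \varphi^T + (n \cdot \varphi) n$, the pointwise identity ${\rm{div}}_\Gamma ((n \cdot \varphi) n) = - H_\Gamma (n \cdot \varphi)$ (using $n \cdot {\rm{grad}}_\Gamma f = 0$ and $H_\Gamma = - {\rm{div}}_\Gamma n$), the vanishing of the normal part's boundary contribution because $\nu \cdot n = 0$, the coordinate formula ${\rm{div}}_\Gamma \varphi^T = \mathcal{G}^{-1/2} \partial_{X_\alpha} ( \sqrt{\mathcal{G}}\, \widehat{\varphi}^{\,\alpha} )$ for tangential fields (which is the pointwise version of \eqref{eq34} plus the Christoffel identity $\mathfrak{g}^\alpha \cdot \partial_{X_\alpha} \mathfrak{g}_\beta = \partial_{X_\beta} \log \sqrt{\mathcal{G}}$), and Gauss--Green on the piecewise Lipschitz planar domain $U$.

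The one step you flag as ``the main obstacle'' and do not execute is exactly where the content of \cite{K19} sits, and it does close; since you stop short of it, your proposal is an essentially correct plan with one computation left undone rather than a complete proof. Concretely: on each arc $\partial U^m$ parametrized by $r \mapsto (p_m(r), q_m(r))$, set $\gamma (r) = \widehat{x}(p_m(r), q_m(r), t)$, so $\dot{\gamma} = \dot{p}_m \mathfrak{g}_1 + \dot{q}_m \mathfrak{g}_2$ and $n^U = { }^t(\dot{q}_m , - \dot{p}_m)/|(\dot{p}_m , \dot{q}_m)|$. The representation of $\nu$ quoted right after Proposition \ref{prop25} then reads $\nu = \dot{\gamma} \times ( \mathfrak{g}_1 \times \mathfrak{g}_2 ) / ( | \dot{\gamma} |\, | \mathfrak{g}_1 \times \mathfrak{g}_2 | )$, and the triple-product expansion gives $( \nu \cdot \mathfrak{g}_1 ) | \dot{\gamma} | = \sqrt{\mathcal{G}}\, \dot{q}_m$ and $( \nu \cdot \mathfrak{g}_2 ) | \dot{\gamma} | = - \sqrt{\mathcal{G}}\, \dot{p}_m$, i.e. $( \nu \cdot \mathfrak{g}_\alpha ) | \dot{\gamma} | = \sqrt{\mathcal{G}}\, n^U_\alpha\, |(\dot{p}_m , \dot{q}_m)|$. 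This is precisely the arc-length matching you describe, and summing over $m$ finishes the proof. If you write it up, make the orientation check explicit (that this $\nu$ is the \emph{outer} conormal when $n^U$ is the outer normal of $U$), since a sign error there would flip the boundary term.
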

\noindent Simon \cite{Sim83} proved Proposition \ref{prop25} in the case when the surface $\Gamma (t)$ have a smooth boundary, and Koba \cite{K19} proved Proposition \ref{prop25} in the case when the surface $\Gamma (t)$ have a piecewise Lipschitz-continuous boundary. From \cite{K19} we find that the unit outer co-normal vector $\nu$ is represented by
\begin{equation*}
\nu = \nu ( \widehat{x} (X,t) , t ) = \frac{n_1^U \mathfrak{g}_2 - n_2^U \mathfrak{g}_1}{ | n_1^U \mathfrak{g}_2 -n_2^U \mathfrak{g}_1| } \times \frac{\mathfrak{g}_1 \times \mathfrak{g}_2}{ | \mathfrak{g}_1 \times \mathfrak{g}_2 | } \text{ for a.e. } X \in \partial U.
\end{equation*}
Here $n^U = n^U (X) = { }^t (n_1^U , n_2^U)$ denotes the unit outer normal vector at $X \in \partial U$. From Proposition \ref{prop25} we have the following formula for the integration by parts:
\begin{equation}
\int_{\Gamma (t)} f (\partial_j^\Gamma g) { \ }d \mathcal{H}^2_x \\
= - \int_{\Gamma (t)} (\partial_j^\Gamma f + H_\Gamma n_j f) g { \ } d \mathcal{H}^2_x + \int_{\partial \Gamma (t)} (\nu_j f ) g { \ } d \mathcal{H}^1_x.\label{eq29}
\end{equation}

Now we define our dissipation energies and work, and state a variation of the dissipation energies. For every $- 1 < \varepsilon < 1$, $\varphi = { }^t ( \varphi_1 , \varphi_2 , \varphi_3 ) \in [C_0^1 ( \Gamma (t))]^3$, and $\psi \in C_0^1 ( \Gamma (t))$,
\begin{align*}
E_D [ v + \varepsilon \varphi ] (t) & := - \int_{\Gamma (t)}  \frac{1}{2} \{ e_1 ( | D_\Gamma (v + \varepsilon \varphi ) |^2 ) + e_2 ( | {\rm{div}}_\Gamma (v + \varepsilon \varphi ) |^2 ) \} { \ } d \mathcal{H}^2_x,\\
E_W [ v + \varepsilon \varphi ] (t) & := \int_{\Gamma (t)} [ \{ {\rm{div}}_\Gamma (v + \varepsilon \varphi ) \} \sigma + \rho F \cdot (v + \varepsilon \varphi ) ] { \ } d \mathcal{H}^2_x,\\
E_{TD} [ \theta + \varepsilon \psi ] (t) & := - \int_{\Gamma (t)}  \frac{1}{2} \{ e_3 ( | {\rm{grad}}_\Gamma (\theta + \varepsilon \psi ) |^2 ) \} { \ } d \mathcal{H}^2_x,\\
E_{GD} [ C + \varepsilon \psi ] (t) & := - \int_{\Gamma (t)}  \frac{1}{2} \{ e_4 ( | {\rm{grad}}_\Gamma ( C + \varepsilon \psi) |^2 ) \} { \ } d \mathcal{H}^2_x.
\end{align*}
\begin{theorem}[Variation of dissipation energies]\label{thm27}Fix $t$. Then\\
$(\mathrm{i})$ For every $\varphi = { }^t ( \varphi_1 , \varphi_2 , \varphi_3 ) \in [ C_0^1 (\Gamma (t) ) ]^3$,
\begin{multline*}
\frac{d}{d \varepsilon} \bigg|_{\varepsilon = 0} E_{D} [v + \varepsilon \varphi] (t)\\
 = \int_{\Gamma (t)} {\rm{div}}_\Gamma \{ e'_1 (| D_\Gamma (v) |^2 ) D_\Gamma (v) + e'_2 (| {\rm{div}}_\Gamma v |^2 )  ({\rm{div}}_\Gamma v) P_\Gamma  \} \cdot \varphi { \ }d \mathcal{H}^2_x.
\end{multline*}
Assume in addition that $\varphi \cdot n =0$ on $\Gamma (t)$. Then
\begin{multline*}
\frac{d}{d \varepsilon} \bigg|_{\varepsilon = 0} E_{D} [v + \varepsilon \varphi] (t)\\
 = \int_{\Gamma (t)} P_\Gamma {\rm{div}}_\Gamma \{ e'_1 (| D_\Gamma (v) |^2 ) D_\Gamma (v) + e'_2 (| {\rm{div}}_\Gamma v |^2 ) ({\rm{div}}_\Gamma v) P_\Gamma  \} \cdot \varphi { \ }d \mathcal{H}^2_x.
\end{multline*}
$(\mathrm{ii})$ For every $\varphi = { }^t ( \varphi_1 , \varphi_2 , \varphi_3 ) \in [ C_0^1 (\Gamma (t) ) ]^3$,
\begin{equation*}
\frac{d}{d \varepsilon} \bigg|_{\varepsilon = 0} E_W [v + \varepsilon \varphi] (t) = \int_{\Gamma (t)} \{ {\rm{div}}_\Gamma ( - \sigma P_\Gamma ) + \rho F \} \cdot \varphi { \ }d \mathcal{H}^2_x.
\end{equation*}
Assume in addition that $\varphi \cdot n =0$ on $\Gamma (t)$. Then
\begin{equation*}
\frac{d}{d \varepsilon} \bigg|_{\varepsilon = 0} E_W [v + \varepsilon \varphi] (t) = \int_{\Gamma (t)} \{ P_\Gamma {\rm{div}}_\Gamma ( - \sigma P_\Gamma ) + P_\Gamma \rho F \}\cdot \varphi { \ }d \mathcal{H}^2_x.
\end{equation*}
$(\mathrm{iii})$ For every $\psi \in C_0^1 ( \Gamma (t) )$,
\begin{equation*}
\frac{d}{d \varepsilon} \bigg|_{\varepsilon = 0} E_{TD} [\theta + \varepsilon \psi ] (t) = \int_{\Gamma (t)} {\rm{div}}_\Gamma \{ e'_3 (| {\rm{grad}}_\Gamma \theta |^2 ) {\rm{grad}}_\Gamma \theta \} \cdot \psi { \ }d \mathcal{H}^2_x.
\end{equation*}
$(\mathrm{iv})$ For every $\psi \in C_0^1 ( \Gamma (t) )$,
\begin{equation*}
\frac{d}{d \varepsilon} \bigg|_{\varepsilon = 0} E_{GD} [ C + \varepsilon \psi ] (t) = \int_{\Gamma (t)} {\rm{div}}_\Gamma \{ e'_4 (| {\rm{grad}}_\Gamma C |^2 ) {\rm{grad}}_\Gamma C \} \cdot \psi { \ }d \mathcal{H}^2_x.
\end{equation*}
\end{theorem}
\noindent From Theorem \ref{thm27} we have several forces derived from variations of our dissipation energies and work. In other words, this paper derives the viscous and diffusion terms of our fluid systems from a variation of the dissipation energies. See Sections \ref{sect4} and \ref{sect5} for details.

Now we consider a variation of the action integral determined by the kinetic energy with respect to the flow map. To this end, we introduce a variation $\widetilde{x}^\varepsilon$ of the flow map $\widetilde{x}$ and the velocity $v^\varepsilon$ determined by the flow map $\widetilde{x}^\varepsilon$. For $- 1 < \varepsilon < 1$, let $\Gamma^\varepsilon (t) (= \{ \Gamma^\varepsilon (t) \}_{0 \leq t < T})$ be an evolving surface with a smooth boundary. We say that $\Gamma^\varepsilon (t)$ is a variation of $\Gamma (t)$ if $\Gamma^\varepsilon (0) = \Gamma_0 $ and $\Gamma^\varepsilon |_{\varepsilon = 0} (t) = \Gamma (t) $. Set
\begin{equation*}
\Gamma^\varepsilon_T = \bigcup_{0 < t < T}\{ \Gamma^\varepsilon (t) \times \{ t \} \} \text{ and }\overline{\Gamma^\varepsilon_T} = \bigcup_{0 \leq t < T}\{ \overline{\Gamma^\varepsilon (t)} \times \{ t \} \}.
\end{equation*}
Let $\widetilde{x}^\varepsilon = \widetilde{x}^\varepsilon ( \xi , t )$ be a flow map on $\Gamma^\varepsilon (t)$, and $v^\varepsilon = v^\varepsilon (x, t) = { }^t ( v^\varepsilon_1 , v^\varepsilon_2 , v^\varepsilon_3 )$ be the velocity determined by the flow map $\widetilde{x}^\varepsilon$, i.e., for $\xi \in \overline{\Gamma_0}$ and $0 \leq t  < T$,
\begin{equation*}
\begin{cases}
v^\varepsilon = v^\varepsilon (x , t) = { }^t (v_1^\varepsilon (x , t) ,v_2^\varepsilon (x , t) ,v_3^\varepsilon (x , t) ),\\
\widetilde{x}^\varepsilon = \widetilde{x}^\varepsilon ( \xi , t ) = ( \widetilde{x}_1^\varepsilon ( \xi , t )  , \widetilde{x}_2^\varepsilon ( \xi , t )  , \widetilde{x}_3^\varepsilon ( \xi , t )  ),\\
\widetilde{x}_t^\varepsilon ( \xi , t ) = v^\varepsilon ( \widetilde{x}^\varepsilon ( \xi , t ) , t),\\
\widetilde{x}^\varepsilon ( \xi , 0) = \xi .
\end{cases}
\end{equation*}
We say that $(  \widetilde{x}^\varepsilon , \Gamma^\varepsilon_T )$ is a variation of $( \widetilde{x} , \Gamma_T )$ if $\widetilde{x}^\varepsilon = \widetilde{x}^\varepsilon ( \xi, t )$ is smooth as a function of $(\varepsilon , \xi , t ) \in (-1 , 1) \times \overline{\Gamma_0} \times [0, T)$ and $\widetilde{x}^\varepsilon ( \xi , t) |_{\varepsilon = 0} = \widetilde{x} ( \xi , t)$. Assume that $\Gamma^\varepsilon (t)$ is expressed by
\begin{equation*}
\Gamma^\varepsilon (t) = \{ x = { }^t ( x_1 , x_2, x_3 ) \in \mathbb{R}^3; { \ } x = \widetilde{x}^\varepsilon ( \xi , t), { \ }\xi \in \Gamma_0 \} .
\end{equation*}
For every $0 \leq t < T$ and $X \in \overline{U} $, 
\begin{equation*}
\widehat{x}^\varepsilon (X , t) := \widetilde{x}^\varepsilon ( \Phi (X) , t ) (= \widetilde{x}^\varepsilon ( \xi , t ) ).
\end{equation*}
Note that $\Gamma_0 = \{ \xi \in \mathbb{R}^3;{ \ }\xi = \Phi (X), X \in U \}$ (see Definition \ref{def22}). Moreover, for every $0 \leq t < T$ and $X \in \overline{U} $,
\begin{multline*}
\mathfrak{g}^\varepsilon_1 = \mathfrak{g}^\varepsilon_1 (X,t) := \frac{\partial \widehat{x}^\varepsilon}{\partial X_1},{ \ }\mathfrak{g}^\varepsilon_2 = \mathfrak{g}^\varepsilon_2 (X,t) := \frac{\partial \widehat{x}^\varepsilon}{\partial X_2},{ \ }\mathfrak{g}^\varepsilon_{\alpha \beta} := \mathfrak{g}^\varepsilon_\alpha \cdot \mathfrak{g}^\varepsilon_\beta,\\
( \mathfrak{g}_\varepsilon^{ \alpha \beta })_{2 \times 2} := ( \mathfrak{g}^\varepsilon_{\alpha \beta })_{2 \times 2}^{-1} = \frac{1}{\mathfrak{g}^\varepsilon_{11} \mathfrak{g}^\varepsilon_{22} - \mathfrak{g}^\varepsilon_{12} \mathfrak{g}^\varepsilon_{21} } 
\begin{pmatrix}
\mathfrak{g}^\varepsilon_{22} & - \mathfrak{g}^\varepsilon_{21}\\
- \mathfrak{g}^\varepsilon_{12} & \mathfrak{g}^\varepsilon_{11}
\end{pmatrix},{ \ }\mathfrak{g}_\varepsilon^\alpha := \mathfrak{g}_\varepsilon^{\alpha \beta} \mathfrak{g}^\varepsilon_\beta,\\
\acute{ \mathfrak{g}}^\varepsilon_{\alpha } :=  \frac{d}{d t} \mathfrak{g}^\varepsilon_\alpha, { \ }\acute{\mathfrak{g}}^\varepsilon_{\alpha \beta} := \frac{d }{d t} \mathfrak{g}^\varepsilon_{\alpha \beta }  , { \ }  \mathcal{G}^\varepsilon = \mathcal{G}^\varepsilon (X,t) := | \mathfrak{g}^\varepsilon_1 \times \mathfrak{g}^\varepsilon_2 |^2 = \mathfrak{g}^\varepsilon_{11} \mathfrak{g}^\varepsilon_{22} - \mathfrak{g}^\varepsilon_{12} \mathfrak{g}^\varepsilon_{21}. 
\end{multline*}
Assume that for all $-1 < \varepsilon < 1$, $X \in \overline{U}$, and $0 \leq t < T$,
\begin{equation*}
\mathfrak{g}^\varepsilon_1 \times \mathfrak{g}_2^\varepsilon  \geq \lambda_2 > 0,
\end{equation*}
where $\lambda_2$ is the positive constant appearing in Definition \ref{def23}. From now on we assume that $\rho^\varepsilon = \rho^\varepsilon ( x , t)$ and $v^\varepsilon = v^\varepsilon (x, t) $ are $C^2$-functions on $\overline{\Gamma_T^\varepsilon}$.

To consider our action integral, we state the surface transport theorem and its application.
\begin{proposition}[Surface transport theorem and its application]\label{prop28}{ \ }\\
$(\mathrm{i})$ Let $f \in C^1 ( \Gamma_T)$. Then for every $\Omega (t) \subset \Gamma (t)$, 
\begin{equation*}
\frac{d }{d t} \int_{\Omega (t)} f (x,t) { \ }d \mathcal{H}^2_x = \int_{\Omega (t)} \{ D_t f + ({\rm{div}}_\Gamma v) f \} (x,t) { \ }d \mathcal{H}^2_x.
\end{equation*}
$(\mathrm{ii})$ Assume that for each $0 < t <T$ and $\Omega (t) \subset \Gamma (t)$,
\begin{equation*}
\frac{d}{d t} \int_{\Omega (t)} \rho (x,t) { \ }d \mathcal{H}^2_x = 0 .
\end{equation*}
Then
\begin{equation*}
D_t \rho + ({\rm{div}}_\Gamma v ) \rho = 0 \text{ on }\Gamma_T. 
\end{equation*}
$(\mathrm{iii})$ Let $\varepsilon \in (-1,1)$. Assume that for each $0 < t <T$ and $\Omega (t) \subset \Gamma^\varepsilon (t)$,
\begin{equation*}
\frac{d}{d t} \int_{\Omega (t)} \rho^\varepsilon (x,t) { \ }d \mathcal{H}^2_x = 0 .
\end{equation*}
Then
\begin{equation*}
D^\varepsilon_t \rho^\varepsilon + ({\rm{div}}_{\Gamma^\varepsilon} v^\varepsilon ) \rho^\varepsilon = 0 \text{ on }\Gamma^\varepsilon_T. 
\end{equation*}
Here
\begin{equation*}
D_t^\varepsilon f := \partial_t f + ( v^\varepsilon \cdot \nabla ) f .
\end{equation*}
\end{proposition}
\noindent Note that 
\begin{align*}
D_t f + ({\rm{div}}_\Gamma v) f & = D_t^N f + {\rm{div}}_\Gamma (f v),\\
D_t^\Gamma f + ({\rm{div}}_\Gamma v) f & = \partial_t f + {\rm{div}}_\Gamma (f v),
\end{align*}
where $D_t^N f = \partial_t f + ( v \cdot n ) (n \cdot \nabla )f $ and $D_t^\Gamma f = \partial_t f + ( v \cdot \nabla_\Gamma )f $. Note also that $D_t f = D_t^\Gamma f $ if $v \cdot n  =0$. The proof of Proposition \ref{prop28} in the case when the surface $\Gamma (t)$ is a closed surface can be founded in Betounes \cite{Bet86}, Gurtin-Struthers-Williams \cite{GSW89}, Dziuk-Elliott \cite{DE07}, and Koba-Liu-Giga \cite{KLG17}. This paper gives a sketch of the proof of Proposition \ref{prop28} for the readers.

\begin{theorem}[Variation of flow map to action integral]\label{thm29}{ \ }\\
Suppose that $(\widetilde{x}^\varepsilon , \Gamma^\varepsilon_T)$ is a variation of $(\widetilde{x}, \Gamma_T)$.  Let $\rho_0 \in C ( \overline{\Gamma_0}) $. Assume that $\rho$ and $\rho^\varepsilon$ satisfy
\begin{equation}\label{Eq210}
\begin{cases}
D_t \rho + ({\rm{div}}_\Gamma v ) \rho = 0{ \ }\text{ on }\Gamma_T,\\
\rho |_{t = 0} = \rho_0,
\end{cases}
\end{equation}
\begin{equation}\label{Eq211}
\begin{cases}
D_t^\varepsilon \rho^\varepsilon + ({\rm{div}}_{\Gamma^\varepsilon} v^\varepsilon ) \rho^\varepsilon = 0{ \ }\text{ on }\Gamma^\varepsilon_T,\\
\rho^\varepsilon |_{t = 0} = \rho_0,
\end{cases}
\end{equation}
and that for every $\xi \in \overline{\Gamma_0}$ and $0 \leq t < T$,
\begin{equation*}
\rho^\varepsilon (\widetilde{x}^\varepsilon (\xi, t) , t) |_{\varepsilon = 0} = \rho ( \widetilde{x} ( \xi , t ) , t).
\end{equation*}
Suppose that there exists $z = z (x,t) = { }^t (z_1, z_2 , z_3 ) \in [ C^2 ( \overline{\Gamma_T}) ]^3$ such that for $0 \leq t <T$ and $\xi \in \overline{\Gamma_0}$
\begin{align*}
\frac{d \widetilde{x}^\varepsilon }{d \varepsilon} \bigg|_{\varepsilon = 0} ( \xi ,t ) = z ( \widetilde{x} ( \xi, t ), t ),\\
\lim_{\tau \uparrow T} z (\widetilde{x}(\xi , \tau ) , \tau ) = { }^t ( 0, 0 ,0).
\end{align*}
Then the following two assertions hold:\\
$(\mathrm{i})$ For each flow map $\widetilde{x}^\varepsilon = \widetilde{x}^\varepsilon ( \xi , t)$, set the \emph{action integral} as follows:
\begin{equation*}
Act[\widetilde{x}^\varepsilon ] = - \int_0^T \int_{\Gamma^\varepsilon (t)} \frac{1}{2} \rho^\varepsilon (x,t) |v^\varepsilon (x,t) |^2 { \ }d \mathcal{H}^2_x d t.
\end{equation*}
Then
\begin{equation*}
\frac{d}{d \varepsilon} \bigg|_{\varepsilon = 0} Act [ \widetilde{x}^\varepsilon ] = \int_0^T \int_{\Gamma (t)} \{\rho D_t v \} (x,t) \cdot z (x, t) { \ }d \mathcal{H}^2_x d t.
\end{equation*}
Moreover, if $z \cdot n = 0$ on $\Gamma_T$, then
\begin{equation*}
\frac{d}{d \varepsilon} \bigg|_{\varepsilon = 0} Act [ \widetilde{x}^\varepsilon ] = \int_0^T \int_{\Gamma (t)} \{ P_\Gamma \rho D_t v \} (x,t) \cdot z (x, t) { \ }d \mathcal{H}^2_x d t.
\end{equation*}
$(\mathrm{ii})$ Let $p$ be a $C^1$-function. For each flow map $\widetilde{x}^\varepsilon = \widetilde{x}^\varepsilon ( \xi , t)$, set the \emph{action integral} as follows:
\begin{equation*}
A_B[\widetilde{x}^\varepsilon ] = - \int_0^T \int_{\Gamma^\varepsilon (t)} \left\{ \frac{1}{2} \rho^\varepsilon (x,t) |v^\varepsilon (x,t) |^2 - p (\rho^\varepsilon (x,t) ) \right\} { \ }d \mathcal{H}^2_x d t.
\end{equation*}
If
\begin{equation*}
z \cdot \nu |_{\partial \Gamma_T} = 0,
\end{equation*}
then
\begin{equation*}
\frac{d}{d \varepsilon} \bigg|_{\varepsilon = 0} A_B [ \widetilde{x}^\varepsilon ] = \int_0^T \int_{\Gamma (t)} \{\rho D_t v + {\rm{grad}}_\Gamma \mathfrak{p} + \mathfrak{p} H_\Gamma n \} (x,t) \cdot z (x, t) { \ }d \mathcal{H}^2_x d t.
\end{equation*}
Moreover, if $z \cdot n = 0$ on $\Gamma_T$, then
\begin{equation*}
\frac{d}{d \varepsilon} \bigg|_{\varepsilon = 0} A_B [ \widetilde{x}^\varepsilon ] = \int_0^T \int_{\Gamma (t)} \{ P_\Gamma \rho D_t v + {\rm{grad}}_\Gamma \mathfrak{p} \} (x,t) \cdot z (x, t) { \ }d \mathcal{H}^2_x d t.
\end{equation*}
Here $\mathfrak{p}  = \mathfrak{p}(\rho ) := \rho p' (\rho)  - p (\rho) $.
\end{theorem}

Applying an energetic variational approach and the first law of thermodynamics with Theorems \ref{thm27} and \ref{thm29}, we can derive four systems \eqref{eq11}, \eqref{Eq113}, \eqref{Eq116}, and \eqref{Eq117}. Note that we derive the barotropic compressible fluid systems \eqref{Eq116} and \eqref{Eq117} by using the action integral $A_B$. See Section \ref{sect5} for detail.

Finally, we state the boundary conditions for our fluid systems.
\begin{theorem}[Conservation laws]\label{Thm210}{ \ }\\
\noindent $(\mathrm{i})$ Assume that \eqref{eq15} holds. Then any solution to system \eqref{eq11} satisfies \eqref{eq19} and \eqref{Eq110}.\\
\noindent $(\mathrm{ii})$ Assume that \eqref{eq15} and \eqref{eq16} hold. Then any solution to system \eqref{eq11} satisfies \eqref{Eq111}.\\
\noindent $(\mathrm{iii})$ Assume that \eqref{eq17} holds. Then any solution to system \eqref{eq11} satisfies \eqref{Eq112}.

\end{theorem}

\begin{theorem}[Energy laws]\label{Thm211}{ \ }\\
\noindent $(\mathrm{i})$ If either \eqref{eq15} or \eqref{Eq114} holds, then any solution to system \eqref{eq11} satisfies \eqref{Eq115}.\\
\noindent $(\mathrm{ii})$ If either \eqref{eq15} or \eqref{Eq114} holds, then any solution to system \eqref{Eq113} satisfies \eqref{Eq115}.\\
\noindent $(\mathrm{iii})$ If \eqref{Eq114} holds, then any solution to system \eqref{Eq116} satisfies \eqref{Eq118}.\\
\noindent $(\mathrm{iv})$ If \eqref{Eq114} holds, then any solution to system \eqref{Eq117} satisfies \eqref{Eq118}.
\end{theorem}

\begin{remark}\label{Rem212}
Instead of \eqref{eq15} or \eqref{Eq114}, we may consider
\begin{equation*}
\int_{\partial \Gamma (t)} S_\Gamma (v,\sigma ) \nu { \ }d \mathcal{H}^1_x = { }^t (0,0,0) \text{ or } \nu \cdot z|_{\partial \Gamma (t)} = 0.
\end{equation*}
\end{remark}

In Section \ref{sect3} we use a flow map and the Riemannian metric induced by the flow map to prove Theorem \ref{thm24} and Proposition \ref{prop28}. In Section \ref{sect4} we calculate variations of our dissipation energies and action integrals to prove Theorems \ref{thm27} and \ref{thm29}. In Section \ref{sect5} we apply an energetic variational approach to make mathematical models for compressible fluid flow on the evolving surface $\Gamma (t)$ with a boundary, and investigate the boundary conditions for our compressible fluid systems to prove Theorems \ref{Thm210} and \ref{Thm211}.

\section{Representation of Energy Densities}\label{sect3}

In this section we first study the representation of our energy densities (Assumption \ref{ass12}) to show the invariance of the energy densities throughout the Riemannian metric induced by flow maps. Then we give a  sketch of the proof of the surface transport theorem. To investigate our energy densities, we first prepare the following lemma.
\begin{lemma}\label{lem31}
$(\mathrm{i})$ For every $\psi \in C ( \mathbb{R}^3)$,
\begin{equation}\label{eq31}
\int_{\Gamma (t)} \psi (x) { \ }d \mathcal{H}^2_x = \int_{U} \widehat{\psi} (X,t) \sqrt{ \mathcal{G} (X,t )} { \ } d X .
\end{equation}
$(\mathrm{ii})$ For each $i, j = 1,2,3$,
\begin{equation}\label{eq32}
\int_{\Gamma (t)} [P_\Gamma]_{i j} { \ } d \mathcal{H}^2_x = \int_U \frac{\partial \widehat{x}_i }{\partial X_\alpha} \frac{\partial \widehat{x}_j }{\partial X_\beta} \mathfrak{g}^{\alpha \beta} \sqrt{ \mathcal{G} } { \ } d X.
\end{equation}
$(\mathrm{iii})$ For each $j=1,2,3$, and $\psi \in C^1 ( \mathbb{R}^3)$,
\begin{equation}
\int_{\Gamma (t)} \partial^\Gamma_j \psi { \ } d \mathcal{H}^2_x = \int_U \frac{\partial \widehat{x}_j }{\partial X_\alpha} \frac{\partial \widehat{\psi} }{\partial X_\beta}  \mathfrak{g}^{\alpha \beta } \sqrt{ \mathcal{G} }{ \ }d X.\label{eq33}
\end{equation}
$(\mathrm{iv})$ For each $\varphi = { }^t (\varphi_1 , \varphi_2 , \varphi_3 ) \in [C^1( \mathbb{R}^3)]^3$,
\begin{equation}
\int_{\Gamma (t)} {\rm{div}}_\Gamma \varphi { \ } d \mathcal{H}^2_x = \int_U \mathfrak{g}^\alpha \cdot \frac{\partial \widehat{\varphi}}{\partial X_\alpha} \sqrt{ \mathcal{G} }{ \ }d X.\label{eq34}
\end{equation}
$(\mathrm{v})$ For each $i, j=1,2,3$, and $\psi \in C^2 ( \mathbb{R}^3)$,
\begin{equation}
\int_{\Gamma (t)} \partial_i^\Gamma \partial^\Gamma_j \psi { \ } d \mathcal{H}^2_x = \int_U \mathfrak{g}^{\zeta \eta} \frac{\partial \widehat{x}_i}{\partial X_\zeta} \frac{\partial }{\partial X_\eta} \left( \frac{\partial \widehat{x}_j }{\partial X_\alpha} \frac{\partial \widehat{\psi} }{\partial X_\beta}  \mathfrak{g}^{\alpha \beta } \right) \sqrt{ \mathcal{G} }{ \ }d X.\label{eq35}
\end{equation}
$(\mathrm{vi})$ For each $\psi \in C^2( \mathbb{R}^3)$ and $\kappa \in C^1 (\mathbb{R}^3)$,
\begin{equation}
\int_{\Gamma (t)} {\rm{div}}_\Gamma (\kappa \nabla_\Gamma \psi ) { \ } d \mathcal{H}^2_x = \int_U \left\{ \frac{1}{\sqrt{\mathcal{G}}} \frac{\partial}{\partial X_\alpha} \left( \widehat{\kappa} \sqrt{\mathcal{G}} \mathfrak{g}^{\alpha \beta} \frac{\partial \widehat{\psi}}{\partial X_\beta} \right) \right\}  \sqrt{ \mathcal{G} }{ \ }d X.\label{eq36}
\end{equation}
Here $\widehat{\psi} = \widehat{\psi} (X,t) : = \psi ( \widehat{x} (X,t))$, $\widehat{\varphi} = \widehat{\varphi} (X,t) : = \varphi ( \widehat{x} (X,t))$, and\\ $\widehat{\kappa} = \widehat{\kappa} (X,t) : = \kappa ( \widehat{x} (X,t))$.
\end{lemma}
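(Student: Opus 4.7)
The plan is to prove the six identities sequentially, each leveraging the previous ones together with the standard differential-geometric description of $\Gamma(t)$ via the parametrization $\widehat{x}(\cdot,t):\overline{U}\to\overline{\Gamma(t)}$ and the dual basis $\mathfrak{g}^\alpha = \mathfrak{g}^{\alpha\beta}\mathfrak{g}_\beta$, which satisfies $\mathfrak{g}^\alpha\cdot\mathfrak{g}_\beta = \delta^\alpha_\beta$ and spans the same tangent plane as $\{\mathfrak{g}_1,\mathfrak{g}_2\}$ by construction.

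First I would dispatch \eqref{eq31} by the classical surface-area formula: since $\sqrt{\mathcal{G}} = |\mathfrak{g}_1\times\mathfrak{g}_2|$ is the parametric area element and the non-degeneracy bound \eqref{eq21} holds on all of $\overline{U}$, we have $d\mathcal{H}^2_x = \sqrt{\mathcal{G}}\,dX$. For \eqref{eq32}, the key algebraic identity is the representation $P_\Gamma|_{x=\widehat{x}(X,t)} = \mathfrak{g}^{\alpha\beta}\,\mathfrak{g}_\alpha\otimes\mathfrak{g}_\beta$, which I would verify by checking that the right-hand side fixes every tangent vector $\mathfrak{g}_\gamma$ (using $\mathfrak{g}^{\alpha\beta}\mathfrak{g}_{\beta\gamma} = \delta^\alpha_\gamma$) and annihilates $n$; taking the $(i,j)$-component and combining with \eqref{eq31} yields \eqref{eq32}. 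Part \eqref{eq33} then follows from $\partial^\Gamma_j\psi = [P_\Gamma]_{ji}\partial_i\psi$, the formula just obtained, and the chain-rule identity $(\partial \widehat{x}_i/\partial X_\beta)(\partial_i\psi)(\widehat{x}(X,t)) = \partial\widehat{\psi}/\partial X_\beta$. For \eqref{eq34}, I would expand ${\rm{div}}_\Gamma\varphi = \sum_{j=1}^3\partial^\Gamma_j\varphi_j$, apply \eqref{eq33} componentwise, and collapse $\sum_j(\partial \widehat{x}_j/\partial X_\alpha)(\partial\widehat{\varphi}_j/\partial X_\beta) = \mathfrak{g}_\alpha\cdot\partial\widehat{\varphi}/\partial X_\beta$; contracting with $\mathfrak{g}^{\alpha\beta}$ produces $\mathfrak{g}^\beta\cdot\partial\widehat{\varphi}/\partial X_\beta$, which after relabeling is \eqref{eq34}.

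Part \eqref{eq35} is then just an iteration of \eqref{eq33}: apply \eqref{eq33} in the outer slot to the function $\partial^\Gamma_j\psi$ and substitute the pointwise version of \eqref{eq33} in the inner slot. Part \eqref{eq36} requires a touch more care: since $\kappa\nabla_\Gamma\psi$ is tangential, I would express it in the basis as $\widehat{\kappa\nabla_\Gamma\psi} = V^\alpha\mathfrak{g}_\alpha$ with $V^\alpha = \widehat{\kappa}\,\mathfrak{g}^{\alpha\gamma}\partial\widehat{\psi}/\partial X_\gamma$ (read off from $\nabla_\Gamma\psi = P_\Gamma\nabla\psi$ together with the representation of $P_\Gamma$), then apply \eqref{eq34} and invoke the Voss--Weyl formula
\[
\mathfrak{g}^\alpha\cdot\frac{\partial(V^\beta\mathfrak{g}_\beta)}{\partial X_\alpha} \;=\; \frac{1}{\sqrt{\mathcal{G}}}\frac{\partial}{\partial X_\alpha}\bigl(\sqrt{\mathcal{G}}\,V^\alpha\bigr),
\]
after which substitution of $V^\alpha$ immediately produces \eqref{eq36}.

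The main obstacle I expect is the Voss--Weyl identity used in \eqref{eq36}. One has to verify that the normal components of $\partial\mathfrak{g}_\beta/\partial X_\alpha$ (which encode the second fundamental form) are killed by the tangential contraction with $\mathfrak{g}^\alpha$, while the tangential Christoffel-type terms combine with $V^\alpha\partial_\alpha\sqrt{\mathcal{G}}$ through Jacobi's formula $\partial_\alpha\sqrt{\mathcal{G}} = \tfrac{1}{2}\sqrt{\mathcal{G}}\,\mathfrak{g}^{\beta\gamma}\partial_\alpha\mathfrak{g}_{\beta\gamma}$ to yield the compact divergence form. Everything else reduces to the chain rule, the definition of $\nabla_\Gamma$ and $P_\Gamma$, and elementary linear algebra in the frame $\{\mathfrak{g}_\alpha,\mathfrak{g}^\beta,n\}$.
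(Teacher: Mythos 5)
Your proposal is correct and follows the same overall architecture as the paper's proof: establish the area formula \eqref{eq31}, then \eqref{eq32}, then derive \eqref{eq33} from the chain rule and $\nabla_\Gamma\psi=P_\Gamma\nabla\psi$, obtain \eqref{eq34} by contracting, and get \eqref{eq35}--\eqref{eq36} by iterating \eqref{eq33}. The one place you genuinely diverge is \eqref{eq32}: the paper computes $n=(\mathfrak{g}_1\times\mathfrak{g}_2)/|\mathfrak{g}_1\times\mathfrak{g}_2|$ in components and verifies $\delta_{ij}-n_in_j=\mathfrak{g}^{\alpha\beta}\,\partial_\alpha\widehat{x}_i\,\partial_\beta\widehat{x}_j$ by brute force using the explicit inverse $(\mathfrak{g}^{\alpha\beta})=(\mathfrak{g}_{\alpha\beta})^{-1}$, whereas you verify the tensor identity $P_\Gamma=\mathfrak{g}^{\alpha\beta}\,\mathfrak{g}_\alpha\otimes\mathfrak{g}_\beta$ by checking both sides on the basis $\{\mathfrak{g}_1,\mathfrak{g}_2,n\}$ --- cleaner, coordinate-free, and equally rigorous since a linear map is determined by its action on a basis. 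For \eqref{eq36} the paper merely writes ``applying \eqref{eq33}'' with no detail, while you correctly identify the Voss--Weyl identity $\mathfrak{g}^\alpha\cdot\partial_{X_\alpha}(V^\beta\mathfrak{g}_\beta)=\mathcal{G}^{-1/2}\partial_{X_\alpha}(\sqrt{\mathcal{G}}\,V^\alpha)$ as the real content of that step and sketch its proof (the normal part of $\partial_{X_\alpha}\mathfrak{g}_\beta$ is annihilated by the tangential $\mathfrak{g}^\alpha$, and $\Gamma^\alpha_{\alpha\beta}=\partial_{X_\beta}\log\sqrt{\mathcal{G}}$ via Jacobi's formula); this supplies a justification the paper leaves implicit. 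Your version is, if anything, the more complete write-up.
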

\begin{proof}[Proof of Lemma \ref{lem31}]
We first show $(\mathrm{i})$. Fix $\psi \in C ( \mathbb{R}^3)$. Since
\begin{equation*}
\Gamma (t) = \{ x \in \mathbb{R}^3; { \ } x = \widehat{x} ( \Phi (X), t), { \ }X \in U \},
\end{equation*}
it follows from the formula of the surface integral to see that
\begin{equation*}
\int_{\Gamma (t)} \psi (x) { \ }d \mathcal{H}^2_x = \int_{U} \psi (\widehat{x} (X, t) , t) { \ } \left| \frac{\partial \widehat{x}}{\partial X_1} \times \frac{\partial \widehat{x}}{\partial X_2} \right|  { \ }d X = \int_{U} \widehat{\psi} { \ }\sqrt{ \mathcal{G}} { \ }d X.
\end{equation*}
Therefore we have \eqref{eq31}. Next we prove $(\mathrm{ii})$. A direct calculation shows that
\begin{multline}\label{eq37}
\int_{\Gamma (t)} n (x,t) { \ } d \mathcal{H}^2_x = \int_{\Gamma (t)} 
\begin{pmatrix}
n_1\\
n_2\\
n_3
\end{pmatrix}
{ \ }d \mathcal{H}^2_x  = \int_{U} \frac{ \mathfrak{g}_1 \times \mathfrak{g}_2}{ | \mathfrak{g}_1 \times \mathfrak{g}_2 | } \sqrt{ \mathcal{G}}{ \ } d X\\
 = \int_{U} \frac{1}{ \sqrt{\mathfrak{g}_{11} \mathfrak{g}_{22} - \mathfrak{g}_{12} \mathfrak{g}_{21}     }  }
\begin{pmatrix}
\frac{\partial \widehat{x}_2 }{\partial X_1} \frac{\partial \widehat{x}_3 }{\partial X_2} - \frac{\partial \widehat{x}_2 }{\partial X_2} \frac{\partial \widehat{x}_3 }{\partial X_1}\\
\frac{\partial \widehat{x}_3 }{\partial X_1} \frac{\partial \widehat{x}_1 }{\partial X_2} - \frac{\partial \widehat{x}_3 }{\partial X_2} \frac{\partial \widehat{x}_1 }{\partial X_1}\\
\frac{\partial \widehat{x}_1 }{\partial X_1} \frac{\partial \widehat{x}_2 }{\partial X_2} - \frac{\partial \widehat{x}_1 }{\partial X_2} \frac{\partial \widehat{x}_2 }{\partial X_1}
\end{pmatrix}
 \sqrt{ \mathcal{G}}{ \ } d X.
\end{multline}
Using \eqref{eq37}, \eqref{eq21}, and
\begin{equation}\label{eq38}
\begin{pmatrix}
\mathfrak{g}^{11} & \mathfrak{g}^{12}\\
\mathfrak{g}^{21} & \mathfrak{g}^{22}
\end{pmatrix}
= \frac{1}{ \mathfrak{g}_{11} \mathfrak{g}_{22} - \mathfrak{g}_{12} \mathfrak{g}_{21} }
\begin{pmatrix}
\mathfrak{g}_{22} & - \mathfrak{g}_{12}\\
- \mathfrak{g}_{21} & \mathfrak{g}_{11}
\end{pmatrix},
\end{equation}
we check that for each $i,j=1,2,3$,
\begin{equation*}
\int_{\Gamma (t)} [P_\Gamma]_{i j} { \ } d \mathcal{H}^2_x = \int_{\Gamma (t)} (\delta_{i j} - n_i n_j ) { \ } d \mathcal{H}^2_x = \int_U \frac{\partial \widehat{x}_i }{\partial X_\alpha} \frac{\partial \widehat{x}_j }{\partial X_\beta} \mathfrak{g}^{\alpha \beta} \sqrt{ \mathcal{G} } { \ } d X,
\end{equation*}
which is \eqref{eq32}. Now, we derive $(\mathrm{iii})$ and $(\mathrm{iv})$. Fix $j=1,2,3$, $\psi \in C^1 ( \mathbb{R}^3)$, $\varphi \in [C^1 (\mathbb{R}^3)]^3$. Since 
\begin{equation*}
\frac{\partial \widehat{\psi}}{\partial X_\beta} = \frac{\partial}{\partial X_\beta} \psi ( \widehat{x} ( X , t) , t ) = \frac{\partial \widehat{x}_i}{\partial X_\beta} \frac{\partial \psi}{\partial \widehat{x}_i},
\end{equation*}
we use $\nabla_\Gamma \psi = P_\Gamma \nabla \psi $ and \eqref{eq32} to observe that
\begin{align*}
\int_{\Gamma (t)} \partial^\Gamma_j \psi { \ } d \mathcal{H}^2_x & = \int_{\Gamma (t)} [P_\Gamma]_{j i}\partial_i  \psi { \ }d \mathcal{H}^2_x\\
& = \int_U \frac{\partial \widehat{x}_j }{\partial X_\alpha} \frac{\partial \widehat{x}_i }{\partial X_\beta}  \mathfrak{g}^{\alpha \beta } \frac{\partial \psi}{\partial \widehat{x}_i} \sqrt{ \mathcal{G} }{ \ }d X\\
& = \int_U \frac{\partial \widehat{x}_j }{\partial X_\alpha} \frac{\partial \widehat{\psi} }{\partial X_\beta} \mathfrak{g}^{\alpha \beta } \sqrt{ \mathcal{G} }{ \ }d X.
\end{align*}
This is \eqref{eq33}. From \eqref{eq33}, we check that
\begin{multline*}
\int_{\Gamma (t)} {\rm{div}}_\Gamma \varphi { \ } d \mathcal{H}^2_x = \int_{\Gamma (t)} \partial_j^\Gamma \varphi_j { \ } d \mathcal{H}^2_x = \int_U \frac{\partial \widehat{x}_j }{\partial X_\alpha} \frac{\partial \widehat{\varphi}_j }{\partial X_\beta} \mathfrak{g}^{\alpha \beta } \sqrt{ \mathcal{G} }{ \ }d X\\
 = \int_U \left( \mathfrak{g}_\alpha \cdot \frac{\partial \widehat{ \varphi} }{\partial X_\beta} \right) \mathfrak{g}^{\alpha \beta} \sqrt{ \mathcal{G} }{ \ }d X = \int_U \left( \mathfrak{g}^\beta \cdot \frac{\partial \widehat{\varphi}}{\partial X_\beta} \right) \sqrt{ \mathcal{G} }{ \ }d X.
\end{multline*}
Note $\mathfrak{g}^{\beta} = \mathfrak{g}^{\alpha \beta} \mathfrak{g}_\beta$. Applying \eqref{eq33}, we see $(\mathrm{v})$ and $(\mathrm{vi})$. Therefore the lemma follows.
  \end{proof}
Let us attack Theorem \ref{thm24}.
\begin{proof}[Proof of Theorem \ref{thm24}]
We first show \eqref{eq23}. A direct calculation shows that
\begin{align}
\acute{\mathfrak{g}}_{\alpha \beta} = \frac{\partial v}{\partial X_\alpha} \cdot \frac{\partial \widehat{x} }{\partial X_\beta} + \frac{\partial \widehat{x} }{\partial X_\alpha} \cdot \frac{\partial v }{\partial X_\beta} & = \frac{\partial v_j}{\partial X_\alpha} \frac{\partial \widehat{x}_j }{\partial X_\beta} + \frac{\partial \widehat{x}_j }{\partial X_\alpha} \frac{\partial v_j }{\partial X_\beta} \notag \\
& = \frac{\partial \widehat{x}_i}{\partial X_\alpha} \frac{\partial v_j}{\partial \widehat{x}_i} \frac{\partial \widehat{x}_j }{\partial X_\beta} + \frac{\partial \widehat{x}_j }{\partial X_\alpha} \frac{\partial \widehat{x}_i}{\partial X_\beta} \frac{\partial v_j }{\partial \widehat{x}_i}. \label{eq39}
\end{align}
From \eqref{eq39}, the Einstein summation convention, and $\mathfrak{g}^{21} = \mathfrak{g}^{12}$ , we find that
\begin{equation}\label{Eq310}
\acute{ \mathfrak{g}}_{\alpha \beta} \mathfrak{g}^{\alpha \beta} = 2 \frac{\partial \widehat{x}_j }{\partial X_\alpha} \frac{\partial \widehat{x}_i}{\partial X_\beta} \frac{\partial v_j }{\partial \widehat{x}_i} \mathfrak{g}^{ \alpha \beta } =  2 \frac{\partial \widehat{x}_j }{\partial X_\alpha} \frac{\partial v_j}{\partial X_\beta} \mathfrak{g}^{ \alpha \beta }. 
\end{equation}
Using \eqref{eq33} and \eqref{Eq310}, we check that
\begin{align}
\int_{\Gamma (t)} {\rm{div}}_\Gamma v { \ }d \mathcal{H}^2_x = \int_{\Gamma (t)} \partial^\Gamma_j v_j { \ } d \mathcal{H}^2_x & = \int_U \frac{\partial \widehat{x}_j }{\partial X_\alpha} \frac{\partial v_j }{\partial X_\beta}  \mathfrak{g}^{\alpha \beta } \sqrt{ \mathcal{G} }{ \ }d X \notag \\
&= \int_U \frac{1}{2} \acute{ \mathfrak{g} }_{\alpha \beta} \mathfrak{g}^{\alpha \beta} \sqrt{ \mathcal{G} } { \ }d X.\label{Eq311}
\end{align}
This gives \eqref{eq23}. 

Secondly, we derive \eqref{eq22}. From \eqref{eq38}, $\mathcal{G} = \mathfrak{g}_{11} \mathfrak{g}_{22} - \mathfrak{g}_{12} \mathfrak{g}_{21}$, and $\mathfrak{g}_{12} = \mathfrak{g}_{21}$, we observe that
\begin{align*}
\int_{U} \frac{d \sqrt{ \mathcal{G}} }{d t} { \ }d X & = \int_{\Gamma (t)} \frac{1}{2 \mathcal{G}} \left( \frac{d \mathcal{G}}{d t} \right)\sqrt{ \mathcal{G}} { \ }d X\\
& = \int_{U} \frac{1}{2} \acute{ \mathfrak{g}}_{\alpha \beta} \mathfrak{g}^{ \alpha \beta} \sqrt{ \mathcal{G}} { \ }d X.
\end{align*}
From \eqref{Eq311} we have \eqref{eq22}.

Thirdly, we deduce \eqref{eq25}. Using \eqref{Eq311}, we see that
\begin{align*}
\int_{\Gamma (t)} \frac{1}{2} e_2 \left( \frac{1}{4} \acute{ \mathfrak{g}}_{\alpha \beta} \acute{ \mathfrak{g}}_{\zeta \eta} \mathfrak{g}^{\alpha \beta} \mathfrak{g}^{\zeta \eta } \right) \sqrt{ \mathcal{G}} { \ }d X & = \int_{\Gamma (t)} \frac{1}{2} e_2 \left( \frac{1}{2} \acute{ \mathfrak{g}}_{\alpha \beta}  \mathfrak{g}^{\alpha \beta} \frac{1}{2} \acute{ \mathfrak{g}}_{\zeta \eta}\mathfrak{g}^{\zeta \eta } \right) \sqrt{ \mathcal{G}} { \ }d X\\
& = \int_{\Gamma (t)} \frac{1}{2} e_2 ( | {\rm{div}}_\Gamma v |^2) { \ }d \mathcal{H}^2_x,
\end{align*}
which is \eqref{eq25}.

Fourthly, we show \eqref{eq26} and \eqref{eq27}. A direct calculation shows that
\begin{multline}
\left( \frac{\partial \widehat{x}_j }{\partial X_\alpha} \frac{\partial \widehat{\theta} }{\partial X_\beta}  \mathfrak{g}^{\alpha \beta } \right) \left( \frac{\partial \widehat{x}_j }{\partial X_\zeta} \frac{\partial \widehat{\theta} }{\partial X_\eta}  \mathfrak{g}^{\zeta \eta } \right) = \mathfrak{g}_{\alpha \zeta} \mathfrak{g}^{\zeta \eta} \mathfrak{g}^{\alpha \beta} \frac{\partial \widehat{\theta} }{\partial X_\beta} \frac{\partial \widehat{\theta} }{\partial X_\eta} \\
= \delta_{\alpha \zeta} \mathfrak{g}^{\alpha \beta} \frac{\partial \widehat{\theta} }{\partial X_\beta} \frac{\partial \widehat{\theta} }{\partial X_\eta} = \mathfrak{g}^{\alpha \beta} \frac{\partial \widehat{\theta} }{\partial X_\alpha} \frac{\partial \widehat{\theta} }{\partial X_\beta}. \label{Eq312}
\end{multline}
Here $\delta_{\alpha \zeta}$ is the Kronecker delta. By \eqref{eq33} and \eqref{Eq312}, we observe that
\begin{align*}
\int_{\Gamma (t)} \frac{1}{2} e_3 ( | {\rm{grad}}_\Gamma \theta |^2) { \ }d \mathcal{H}^2_x & = \int_{\Gamma (t)} \frac{1}{2} e_3 ( | ( \partial_j^\Gamma \theta ) (\partial_j^\Gamma \theta) |^2) { \ }d \mathcal{H}^2_x\\
& = \int_U \frac{1}{2} e_3 \left( \mathfrak{g}^{\alpha \beta} \frac{\partial \widehat{\theta}}{\partial X_\alpha} \frac{\partial \widehat{\theta}}{\partial X_\beta}  \right) \sqrt{ \mathcal{G}}{ \ }d X.
\end{align*}
Thus, we have \eqref{eq26}. Similarly, we obtain \eqref{eq27}.

Finally, we derive \eqref{eq24}. For each $i,j = 1 , 2, 3$, write
\begin{equation*}
[\widehat{D} (v)]_{i j} = \frac{1}{2} \left( \frac{\partial v_j}{\partial \widehat{x}_i} + \frac{\partial v_i}{\partial \widehat{x}_j} \right).
\end{equation*}
From \eqref{eq39} we find that
\begin{equation}\label{Eq313}
\frac{1}{2} \acute{\mathfrak{g}}_{\alpha \beta} = \frac{\partial \widehat{x}_i}{\partial X_\alpha} [\widehat{D} (v) ]_{i j} \frac{\partial \widehat{x}_j}{\partial X_\beta}.
\end{equation}
By \eqref{eq32} and \eqref{Eq313}, we observe that
\begin{align*}
\frac{1}{4} \acute{ \mathfrak{g}}_{\alpha \beta} \acute{ \mathfrak{g}}_{\zeta \eta} \mathfrak{g}^{\alpha \zeta} \mathfrak{g}^{\beta \eta} & = \frac{\partial \widehat{x}_i}{\partial X_\alpha} [\widehat{D} (v) ]_{i j} \frac{\partial \widehat{x}_j}{\partial X_\beta}\mathfrak{g}^{\alpha \zeta} \mathfrak{g}^{\beta \eta} \frac{\partial \widehat{x}_k}{\partial X_\zeta} [\widehat{D} (v) ]_{k \ell} \frac{\partial \widehat{x}_\ell}{\partial X_\eta}\\
& = [\widehat{D} (v) ]_{i j}  \frac{\partial \widehat{x}_i}{\partial X_\alpha} \frac{\partial \widehat{x}_k}{\partial X_\zeta} \mathfrak{g}^{\alpha \zeta} [\widehat{D} (v) ]_{k \ell} \frac{\partial \widehat{x}_j}{\partial X_\beta} \frac{\partial \widehat{x}_\ell}{\partial X_\eta} \mathfrak{g}^{\beta \eta}\\
& = [\widehat{D}(v)]_{i j} [P_\Gamma ]_{i k} [\widehat{D} (v)]_{k \ell} [P_\Gamma ]_{j \ell} .
\end{align*}
This implies that
\begin{align*}
\int_{U} \frac{1}{2} e_1 \left( \frac{1}{4} \acute{ \mathfrak{g} }_{\alpha \beta} \acute{ \mathfrak{g} }_{\zeta \eta} \mathfrak{g}^{\alpha \zeta} \mathfrak{g}^{\beta \eta } \right) \sqrt{ \mathcal{G} } { \ }d X & = \int_{\Gamma (t)} \frac{1}{2} e_1 ( | P_\Gamma D (v) P_\Gamma |^2) { \ }d \mathcal{H}^2_x\\
& = \int_{\Gamma (t)} \frac{1}{2} e_1 ( |D_\Gamma (v) |^2) { \ }d \mathcal{H}^2_x.
\end{align*}
Thus, we have \eqref{eq24}. Therefore Theorem \ref{thm24} is proved.   \end{proof}

Let us give a sketch of the proof of Proposition \ref{prop28}.
\begin{proof}[Proof of Proposition \ref{prop28}]
We first show $(\mathrm{i})$. Let $f \in C^1 ( \Gamma_T)$. Fix $\Omega (t) \subset \Gamma (t)$. By the property $(\mathrm{iii})$ in Definition \ref{def23}, there is $U_1 \subset U$ such that
\begin{equation*}
\Omega (t) = \{ x \in \mathbb{R}^3; { \ }x = \widehat{x} (X,t), { \ }X \in U_1 \}.
\end{equation*}
Using \eqref{eq22}, we see that
\begin{align*}
\frac{d}{d t} \int_{\Omega (t)} f (x,t) { \ }d\mathcal{H}^2_x & = \frac{d}{d t} \int_{U_1} f ( \widehat{x} (X,t) , t ) \sqrt{ \mathcal{G}(X,t) } { \ }d X\\
&= \int_{\Omega (t)} \{ D_t f + ({\rm{div}}_\Gamma v ) f  \}{ \ }d\mathcal{H}^2_x.
\end{align*}
Therefore the assertion $(\mathrm{i})$ is proved. Using the assertion $(\mathrm{i})$, we can derive $(\mathrm{ii})$ and $(\mathrm{iii})$. Therefore Proposition \ref{prop28} is proved.
  \end{proof}

\section{Variation of Dissipation Energies and Action Integral}\label{sect4}

We first calculate a variation of our dissipation energies with respect to the velocity to prove Theorem \ref{thm27}. Then we consider a variation of our action integral with respect to the flow map to prove Theorem \ref{thm29}.
\begin{proof}[Proof of Theorem \ref{thm27}]
We now prove $(\mathrm{i})$ and $(\mathrm{ii})$. Fix $\varphi = { }^t ( \varphi_1 , \varphi_2 , \varphi_3 ) \in [C_0^1 ( \Gamma (t) )]^3$. Applying the integration by parts \eqref{eq29}, we check that
\begin{multline*}
\frac{d}{d \varepsilon } \bigg|_{\varepsilon = 0} E_{D} [v + \varepsilon \varphi ] (t) \\
= - \int_{\Gamma (t)} \{ e_1'( |D_\Gamma (v) |^2 ) D_\Gamma (v): D_\Gamma ( \varphi) + e_2'( |{\rm{div}}_\Gamma v |^2 ) ({\rm{div}}_\Gamma v)({\rm{div}}_\Gamma \varphi ) \}{ \ }d \mathcal{H}^2_x,\\
= \int_{\Gamma (t)} {\rm{div}}_\Gamma \{ e_1'( |D_\Gamma (v) |^2 ) D_\Gamma (v) + e_2'( |{\rm{div}}_\Gamma v |^2 )  ({\rm{div}}_\Gamma v) P_\Gamma  \} \cdot \varphi { \ }d \mathcal{H}^2_x
\end{multline*}
and that
\begin{multline*}
\frac{d}{d \varepsilon } \bigg|_{\varepsilon = 0} E_{W} [v + \varepsilon \varphi ] (t) = \int_{\Gamma (t)} \{ ({\rm{div}}_\Gamma \varphi ) \sigma + \rho F \cdot \varphi \} { \ }d \mathcal{H}^2_x\\
= \int_{\Gamma (t)} \{ - {\rm{grad}}_\Gamma \sigma - \sigma H_\Gamma n + \rho F \} \cdot \varphi { \ }d \mathcal{H}^2_x = \int_{\Gamma (t)} \{ {\rm{div}}_\Gamma ( - \sigma P_\Gamma ) + \rho F \} \cdot \varphi { \ }d \mathcal{H}^2_x.
\end{multline*}
Here we used the facts that $\varphi |_{\partial \Gamma (t)} = { }^t (0,0,0)$, and $D_\Gamma (v) n = { }^t (0,0,0)$, and ${\rm{div}}_\Gamma \{ ({\rm{div}}_\Gamma v ) P_\Gamma  \} = {\rm{grad}}_\Gamma ({\rm{div}}_\Gamma v) + ({\rm{div}}_\Gamma v) H_\Gamma n$. Since $ \phi \cdot \varphi = P_\Gamma \phi \cdot \varphi$ for $\phi = { }^t (\phi_1 , \phi_2 , \phi_3 )$ if $\varphi \cdot n =0$ on $\Gamma (t)$, we see $(\mathrm{i})$ and $(\mathrm{ii})$. Similarly, we deduce $(\mathrm{iii})$ and $(\mathrm{iv})$. Therefore Theorem \ref{thm27} is proved.
 \end{proof}

To prove Theorem \ref{thm29} we study variations of the flow map $\widetilde{x}$. Let $(\widetilde{x}^\varepsilon , \Gamma^\varepsilon_T)$ be a variation of $( \widetilde{x} , \Gamma_T)$, that is, $\widetilde{x}^\varepsilon ( \xi ,t ) |_{\varepsilon = 0} = \widetilde{x} ( \xi ,t )$ and $ \widetilde{x}^\varepsilon ( \xi, 0) = \widetilde{x} (\xi , 0) = \xi$. For every $- 1< \varepsilon < 1 $, $\xi \in \overline{\Gamma_0}$, and $X \in \overline{U}$, set
\begin{align*}\widetilde{y}^\varepsilon ( \xi ,t ) & = \frac{d \widetilde{x}^\varepsilon}{d \varepsilon} ( \xi ,t ),\\
\widehat{y}^\varepsilon ( X ,t) & = \widetilde{y}^\varepsilon ( \Phi (X), t ) = \widetilde{y}^\varepsilon ( \xi , t ),\\
\widetilde{y} ( \xi ,t ) & = \frac{d \widetilde{x}^\varepsilon}{d \varepsilon} \bigg|_{\varepsilon = 0} ( \xi ,t ),\\
\widehat{y} ( X ,t) & = \widetilde{y} ( \Phi (X), t ) = \widetilde{y} ( \xi , t ).
\end{align*}
Assume that there is $z = z (x,t) = { }^t (z_1, z_2,z_3) \in [ C^2 ( \overline{\Gamma_T} ) ]^3$ such that for every $0 \leq t <T$ and $\xi \in \overline{\Gamma_0}$,
\begin{equation*}
z (x,t ) = z (\widetilde{x} (\xi, t) , t) = \widetilde{y} ( \xi , t ).
\end{equation*}
We now study fundamental properties of the variations $\widetilde{y}$, $\widehat{y}$, and $z$.
\begin{lemma}\label{lem41}
$(\mathrm{i})$ For every $X \in \overline{U}$ and $\xi \in \overline{\Gamma_0}$,
\begin{equation}\label{eq41}
\begin{cases}
\widehat{y} (X,0) = \widetilde{y} (\xi, 0)  = { }^t (0 , 0 , 0),\\
{\displaystyle{\lim_{\tau \uparrow T }\widehat{y} (X, \tau ) = \lim_{\tau \uparrow T} \widetilde{y} (\xi, \tau) }} = { }^t (0 , 0 , 0).
\end{cases}
\end{equation}
$(\mathrm{ii})$
\begin{equation}\label{eq42}
\int_{\Gamma (t)} {\rm{div}}_{\Gamma } z  { \ }d \mathcal{H}^2_x = \int_{U} \frac{d }{d \varepsilon} \bigg|_{\varepsilon = 0} \sqrt{\mathcal{G}^\varepsilon} { \ } d X .
\end{equation}
\end{lemma}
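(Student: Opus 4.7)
The plan is to mimic the derivation of \eqref{eq22} from Theorem \ref{thm24}, replacing the $t$-derivative of the flow map by the $\varepsilon$-derivative and the velocity $v$ by the variation field $z$.

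For assertion $(\mathrm{i})$, since each $\widetilde{x}^\varepsilon$ is a flow map on the variation $\Gamma^\varepsilon (t)$ with $\Gamma^\varepsilon (0) = \Gamma_0$, the initial-value property in Definition \ref{def23}$(\mathrm{i})$ gives $\widetilde{x}^\varepsilon ( \xi , 0) = \xi$ for every $\varepsilon \in (-1 , 1)$ and $\xi \in \overline{\Gamma_0}$. Differentiating this identity in $\varepsilon$ at $\varepsilon = 0$ yields $\widetilde{y} (\xi , 0) = { }^t (0,0,0)$, and then $\widehat{y} (X, 0) = \widetilde{y} (\Phi (X), 0) = { }^t (0,0,0)$ follows at once.

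For assertion $(\mathrm{ii})$, I would first interchange the smooth $\varepsilon$- and $X$-derivatives to obtain $\frac{d}{d\varepsilon}\big|_{\varepsilon = 0} \mathfrak{g}_\alpha^\varepsilon = \partial \widehat{y} / \partial X_\alpha$. This produces the analogue of \eqref{eq39}:
\begin{equation*}
\frac{d}{d\varepsilon}\bigg|_{\varepsilon=0}\mathfrak{g}_{\alpha \beta}^\varepsilon = \frac{\partial \widehat{y}}{\partial X_\alpha} \cdot \mathfrak{g}_\beta + \mathfrak{g}_\alpha \cdot \frac{\partial \widehat{y}}{\partial X_\beta}.
\end{equation*}
Applying Jacobi's formula $\frac{d}{d \varepsilon}\det(\mathfrak{g}_{\alpha \beta}^\varepsilon) = \det(\mathfrak{g}_{\alpha \beta}^\varepsilon) \mathfrak{g}_\varepsilon^{\alpha \beta} \frac{d}{d \varepsilon} \mathfrak{g}_{\alpha \beta}^\varepsilon$ and the symmetry $\mathfrak{g}^{\alpha \beta} = \mathfrak{g}^{\beta \alpha}$ reduces this at $\varepsilon = 0$ to
\begin{equation*}
\frac{d}{d\varepsilon}\bigg|_{\varepsilon=0}\sqrt{\mathcal{G}^\varepsilon} = \sqrt{\mathcal{G}} \, \mathfrak{g}^\alpha \cdot \frac{\partial \widehat{y}}{\partial X_\alpha},
\end{equation*}
exactly paralleling the chain that leads from \eqref{eq39} to \eqref{Eq311}.

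On the other hand, the hypothesis $\widetilde{y} ( \xi , t) = z ( \widetilde{x} (\xi, t) , t)$ together with $\widehat{x} (X, t) = \widetilde{x} (\Phi (X) , t)$ yields $\widehat{y} (X, t) = z ( \widehat{x} (X, t) , t)$, so the $\widehat{z}$ of Lemma \ref{lem31}$(\mathrm{iv})$ coincides with $\widehat{y}$. Thus formula \eqref{eq34} applied to the vector field $\varphi = z$ becomes
\begin{equation*}
\int_{\Gamma (t)} {\rm{div}}_\Gamma z { \ } d \mathcal{H}^2_x = \int_U \mathfrak{g}^\alpha \cdot \frac{\partial \widehat{y}}{\partial X_\alpha} \sqrt{\mathcal{G}} { \ } d X,
\end{equation*}
and matching the two displays gives \eqref{eq42}. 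The only genuinely delicate point is checking the chain of identifications $\widehat{z} = z \circ \widehat{x} = z \circ \widetilde{x} \circ \Phi = \widetilde{y} \circ \Phi = \widehat{y}$, needed to invoke Lemma \ref{lem31}$(\mathrm{iv})$; once this is in place, the rest is linear-algebra bookkeeping and poses no real obstacle, since the interchange of $\varepsilon$- and $X$-derivatives is justified by the assumed joint smoothness of $\widetilde{x}^\varepsilon$ in $(\varepsilon , \xi , t)$.
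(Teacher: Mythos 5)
Your proof is correct and follows essentially the same route as the paper: part (i) by differentiating $\widetilde{x}^\varepsilon(\xi,0)=\xi$ in $\varepsilon$, and part (ii) by computing $\frac{d}{d\varepsilon}\sqrt{\mathcal{G}^\varepsilon}=\sqrt{\mathcal{G}^\varepsilon}\,\mathfrak{g}_\varepsilon^\alpha\cdot\partial\widehat{y}^\varepsilon/\partial X_\alpha$ and matching it against \eqref{eq34} applied to $z$ via the identification $\widehat{y}=z\circ\widehat{x}$. The only cosmetic difference is that you invoke Jacobi's formula for the determinant where the paper expands the $2\times 2$ determinant $\mathcal{G}^\varepsilon=\mathfrak{g}^\varepsilon_{11}\mathfrak{g}^\varepsilon_{22}-\mathfrak{g}^\varepsilon_{12}\mathfrak{g}^\varepsilon_{21}$ by hand; these are interchangeable.
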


\begin{proof}[Proof of Lemma \ref{lem41}]
We first show $(\mathrm{i})$. Since $\widetilde{x}^\varepsilon ( \xi , 0)  = \xi $ for every $- 1 < \varepsilon < 1$ and $\xi \in \overline{\Gamma_0}$, we check that
\begin{align*}
\frac{d \widetilde{x}^\varepsilon}{d \varepsilon} ( \xi , 0 ) & = \lim_{h \to 0} \frac{\widetilde{x}^{\varepsilon + h} ( \xi , 0 ) - \widetilde{x}^\varepsilon ( \xi , 0) }{h}\\
& = \lim_{h \to 0} \frac{ \xi - \xi }{h} = { }^t (0 , 0 , 0 ).
\end{align*}
This implies that
\begin{align*}
\widehat{y} ( X , 0 ) = \widetilde{y} ( \xi , 0 ) = \frac{d \widetilde{x}^\varepsilon}{d \varepsilon} \bigg|_{\varepsilon = 0} ( \xi , 0 ) = { }^t (0 , 0 , 0 ).
\end{align*}
By the assumptions of Theorem \ref{thm29} on $z$ we see that
\begin{equation*}
{\displaystyle{\lim_{\tau \uparrow T }\widehat{y} (X, \tau ) = \lim_{\tau \uparrow T} \widetilde{y} (\xi, \tau) }} = { }^t (0 , 0 , 0).
\end{equation*}
Therefore we have \eqref{eq41}.

Next we show $(\mathrm{ii})$. We first show that 
\begin{equation}
 \frac{1}{\mathcal{G}^\varepsilon }\frac{d}{d \varepsilon} \mathcal{G}^\varepsilon = 2 \mathfrak{g}_\varepsilon^\alpha \cdot \frac{ \partial \widehat{y}^\varepsilon}{\partial X_\alpha}. \label{eq43}
\end{equation}
From $\mathcal{G}^\varepsilon = \mathfrak{g}_{11}^\varepsilon \mathfrak{g}_{22}^\varepsilon - \mathfrak{g}_{12}^\varepsilon \mathfrak{g}_{21}^\varepsilon$ and $\mathfrak{g}^\varepsilon_{\alpha \beta} = \mathfrak{g}^\varepsilon_\alpha \cdot \mathfrak{g}^\varepsilon_\beta$, we have
\begin{multline*}
\frac{d}{d \varepsilon} \mathcal{G}^\varepsilon = 2 \left( \mathfrak{g}^\varepsilon_1 \cdot \frac{\partial \widehat{y}^\varepsilon}{\partial X_1} \right) \mathfrak{g}^\varepsilon_{22} + 2 \left( \mathfrak{g}^\varepsilon_2 \cdot \frac{\partial \widehat{y}^\varepsilon}{\partial X_2} \right) \mathfrak{g}^\varepsilon_{11} \\- 2 \left( \mathfrak{g}^\varepsilon_1 \cdot \frac{\partial \widehat{y}^\varepsilon}{\partial X_2} \right) \mathfrak{g}^\varepsilon_{21} - 2 \left( \mathfrak{g}^\varepsilon_2 \cdot \frac{\partial \widehat{y}^\varepsilon}{\partial X_1} \right) \mathfrak{g}^\varepsilon_{12}.
\end{multline*}
Since $(\mathfrak{g}_\varepsilon^{\alpha \beta})_{2 \times 2} = (\mathfrak{g}^\varepsilon_{\alpha \beta})_{2 \times 2}^{-1}$ and $\mathfrak{g}_\varepsilon^{\beta \alpha} = \mathfrak{g}_\varepsilon^{\alpha \beta}$, we observe that
\begin{align*}
\frac{1}{ \mathcal{G}^\varepsilon}\frac{d}{d \varepsilon} \mathcal{G}^\varepsilon  & = 2 \left( (\mathfrak{g}^\varepsilon_1 \mathfrak{g}_\varepsilon^{11} + \mathfrak{g}^\varepsilon_2 \mathfrak{g}_\varepsilon^{12}) \cdot \frac{\partial \widehat{y}^\varepsilon}{\partial X_1} \right) + 2 \left( (\mathfrak{g}^\varepsilon_1 \mathfrak{g}^{21} + \mathfrak{g}^\varepsilon_2 \mathfrak{g}_\varepsilon^{22}) \cdot \frac{\partial \widehat{y}^\varepsilon}{\partial X_2} \right)\\
& =  2 \left( \mathfrak{g}_\varepsilon^1 \cdot \frac{\partial \widehat{y}^\varepsilon}{\partial X_1} \right) + 2 \left( \mathfrak{g}_\varepsilon^2 \cdot \frac{\partial \widehat{y}^\varepsilon}{\partial X_2} \right) = 2 \mathfrak{g}_\varepsilon^\alpha \cdot \frac{ \partial \widehat{y}^\varepsilon}{\partial X_\alpha}. 
\end{align*}
Thus, we have \eqref{eq43}. Using \eqref{eq43} and \eqref{eq34}, we check that
\begin{align*}
\int_{U} \frac{d }{d \varepsilon} \bigg|_{\varepsilon = 0} \sqrt{\mathcal{G}^\varepsilon} { \ } d X & = \int_{U} \left( \frac{1}{ 2 \mathcal{G}^\varepsilon} \frac{d \mathcal{G}^\varepsilon }{d \varepsilon} \right) \sqrt{\mathcal{G}^\varepsilon} \bigg|_{\varepsilon =0} { \ } d X \\
& = \int_{U} \left( \mathfrak{g}^\alpha \cdot \frac{ \partial \widehat{y}}{\partial X_\alpha} \right) \sqrt{\mathcal{G}} { \ } d X\\
& = \int_{\Gamma (t)} {\rm{div}}_\Gamma z { \ }d \mathcal{H}^2_x. 
\end{align*}
Note that $\widehat{y}(X,t) = z ( \widehat{x} (X,t) , t)$. Therefore the lemma follows. 
  \end{proof}

Let us attack Theorem \ref{thm29}.
\begin{proof}[Proof of Theorem \ref{thm29}]
We first show that $\rho$ and $\rho^\varepsilon$ are represented by
\begin{align}
\rho ( \widehat{x} (X,t) , t) & = \rho_0 ( \Phi (X)) \frac{ \sqrt{ \mathcal{G} (X,0)} }{ \sqrt{\mathcal{G} (X,t)} },\label{eq44}\\
\rho^\varepsilon ( \widehat{x}^\varepsilon (X,t) , t) & = \rho_0 ( \Phi (X)) \frac{ \sqrt{ \mathcal{G} (X,0)} }{ \sqrt{\mathcal{G}^\varepsilon (X,t)} },\label{eq45}
\end{align}
where $\Phi$ is the function in Definition \ref{def22}. To this end, we set
\begin{align*}
Q (X , t ) & = \rho ( \widehat{x} ( X , t ) , t ) \sqrt{ \mathcal{G} (X , t)},\\
Q^\varepsilon (X , t ) & = \rho^\varepsilon ( \widehat{x}^\varepsilon ( X , t ) , t ) \sqrt{ \mathcal{G}^\varepsilon (X , t)}.
\end{align*}
Fix $X$. A direct calculation gives
\begin{equation*}
\frac{d Q^\varepsilon }{d t} (X , t) = \left( \rho^\varepsilon_t + v_j^\varepsilon \frac{\partial \rho^\varepsilon}{\partial \widehat{x}^\varepsilon_j} \right) \sqrt{ \mathcal{G}^\varepsilon} + \rho^\varepsilon \frac{d \sqrt{\mathcal{G}^\varepsilon}}{dt}.
\end{equation*}
Using \eqref{eq22} and \eqref{Eq211}, we see that for each $U_1 \subset U$ and $0 < t <T$,
\begin{equation*}
\int_{U_1} \frac{d Q^\varepsilon}{d t} { \ }d X = \int_{\Gamma_1^\varepsilon (t)} \{ D_t^\varepsilon \rho^\varepsilon + ({\rm{div}}_{\Gamma^\varepsilon} v^\varepsilon ) \rho^\varepsilon \} { \ }d \mathcal{H}^2_x = 0,
\end{equation*}
where
\begin{equation*}
\Gamma_1^\varepsilon (t) = \{ x \in \mathbb{R}^3; { \ }x = \widehat{x}^\varepsilon (X, t), { \ }X \in U_1 \}.
\end{equation*}
This implies that
\begin{equation*}
\frac{d Q^\varepsilon}{d t} (X,t)= 0.
\end{equation*}
Integrating with respect to time, we see that for $0 < t <T$,
\begin{equation*}
Q^\varepsilon (X  , t) = Q^\varepsilon ( X , 0).
\end{equation*}
That is,
\begin{align*}
\rho^\varepsilon ( \widehat{x}^\varepsilon (X,t) , t) \sqrt{ \mathcal{G} (X,t)} & = \rho^\varepsilon ( \widehat{x}^\varepsilon (X,0) , 0) \sqrt{\mathcal{G} (X,0)}\\
 & = \rho_0 (\Phi (X)) \sqrt{ \mathcal{G} (X,0)}. 
\end{align*}
Thus, we see \eqref{eq45}. Similarly, we have \eqref{eq44}. Now we write
\begin{equation*}
\widetilde{\rho}_0 (X) = \rho_0 ( \Phi (X) ) \sqrt{ \mathcal{G} ( X , 0)}.
\end{equation*}
We now only prove $(\mathrm{ii})$ since the proof of $(\mathrm{i})$ is similar. By the definition of the action integral $A_B$ and \eqref{eq45}, we find that
\begin{multline}\label{eq46}
A_B[ \widetilde{x}^\varepsilon] = - \int_0^T \int_U \left\{ \frac{1}{2} \frac{ \widetilde{\rho}_0 (X) }{ \sqrt{ \mathcal{G}^\varepsilon } }  \frac{d \widehat{x}^\varepsilon }{d t} \cdot \frac{d \widehat{x}^\varepsilon}{d t} - p\left( \frac{ \widetilde{\rho}_0 (X) }{ \sqrt{ \mathcal{G}^\varepsilon }  } \right) \right\} \sqrt{ \mathcal{G}^\varepsilon } { \ }d X d t\\
= \int_0^T \int_U - \frac{1}{2} \widetilde{\rho}_0 (X)  \frac{d \widehat{x}^\varepsilon }{d t} \cdot \frac{d \widehat{x}^\varepsilon}{d t} { \ }d X d t + \int_0^T \int_U p\left( \frac{ \widetilde{\rho}_0 (X) }{ \sqrt{ \mathcal{G}^\varepsilon }  } \right)  \sqrt{ \mathcal{G}^\varepsilon } { \ }d X d t\\
=: A_1^\varepsilon + A_2^\varepsilon .
\end{multline}
Using the integration by parts with \eqref{eq41} and \eqref{eq44}, we check that
\begin{align}
\frac{d}{ d \varepsilon} \bigg|_{\varepsilon = 0} A_1^\varepsilon & = \int_0^T \int_U - \widetilde{\rho}_0 (X)  \frac{d \widehat{x} }{d t} \cdot \frac{d \widehat{y}}{d t} { \ }d X d t \notag \\
& = \int_0^T \int_U - \widetilde{\rho}_0 (X)  v ( \widehat{x} ( X, t ) , t ) \cdot \frac{d \widehat{y}}{d t} { \ }d X d t \notag \\
& = \int_0^T \int_U \frac{\widetilde{\rho}_0 (X)}{\sqrt{ \mathcal{G} (X,t) } } \left( v_t + v_j \frac{\partial v }{\partial \widehat{x}_j} \right) \cdot \widehat{y} \sqrt{\mathcal{G} (X,t)} { \ }d X d t \notag \\
&  = \int_0^T \int_{\Gamma (t)} \{ \rho D_t v \} (x,t ) \cdot z (x,t) { \ } d \mathcal{H}^2_x d t.\label{eq47}
\end{align}
Since
\begin{equation*}
\frac{d}{d \varepsilon} A_2^\varepsilon = \int_0^T \int_U  \left\{ - \frac{\widetilde{\rho}_0(X)}{\sqrt{\mathcal{G}^\varepsilon}} p \left( \frac{\widetilde{\rho}_0 (X)}{\sqrt{\mathcal{G}^\varepsilon} } \right) +  p \left( \frac{\widetilde{\rho}_0 (X)}{\sqrt{\mathcal{G}^\varepsilon} } \right) \right\} \frac{d \sqrt{\mathcal{G}^\varepsilon  }}{d \varepsilon} { \ } d X d t,
\end{equation*}
we use \eqref{eq42}, \eqref{eq44}, and \eqref{eq29} to check that
\begin{multline}\label{eq48}
\frac{d}{d \varepsilon} \bigg|_{\varepsilon = 0} A_2^\varepsilon = \int_0^T \int_{\Gamma (t)} \{ - \rho p' ( \rho ) + p (\rho ) \} {\rm{div}}_\Gamma z { \ } d \mathcal{H}_x^2 d t\\
= \int_0^T \int_{\Gamma (t)} {\rm{div}}_\Gamma ( \mathfrak{p} P_\Gamma ) \cdot z { \ } d \mathcal{H}_x^2 d t - \int_0^T \int_{\partial \Gamma (t)} \mathfrak{p} (\nu \cdot z)  { \ } d \mathcal{H}_x^1 d t.
\end{multline}
Here $\mathfrak{p} = \mathfrak{p} (\rho) = \rho p' ( \rho ) - p ( \rho )$. Since $z \cdot \nu |_{\partial \Gamma (t)} = 0$, we combine \eqref{eq46}, \eqref{eq47}, \eqref{eq48} to conclude that
\begin{equation*}
\frac{d}{d \varepsilon} \bigg|_{\varepsilon = 0} A_B [ \widetilde{x}^\varepsilon ] = \int_0^T \int_{\Gamma (t)} \{\rho D_t v + {\rm{grad}}_\Gamma \mathfrak{p} + \mathfrak{p} H_\Gamma n \} (x,t) \cdot z (x, t) { \ }d \mathcal{H}^2_x d t.
\end{equation*}
Therefore Theorem \ref{thm29} is proved.
  \end{proof}

\section{Mathematical Modeling}\label{sect5}

In this section we make mathematical models for compressible fluid flow on the evolving surface $\Gamma (t)$ with a boundary. In subsection \ref{subsec51} we apply both an energetic variational approach and the first law of thermodynamics to derive the generalized compressible fluid system \eqref{eq11}. In subsection \ref{subsec52} we study the enthalpy, the entropy, and the Helmholtz free energy of system \eqref{eq11}, and investigate the conservation and energy laws of the system. In subsection \ref{subsec53} we derive the generalized tangential compressible fluid system \eqref{Eq113} on the evolving surface by applying an energetic variational approach. In subsection \ref{subsec54} we state how to derive the two barotropic compressible fluid systems \eqref{Eq116} and \eqref{Eq117}.

\subsection{Derivation of Generalized Compressible Fluid System}\label{subsec51}

Let us derive the generalized compressible fluid system \eqref{eq11} on the evolving surface $\Gamma (t)$ with a boundary. We set the generalized energy densities for compressible fluid on the surface $\Gamma (t)$ as in Assumption \ref{ass12}, and the action integral $Act$, the energy $E_D$ dissipation due to the viscosities, the work $E_W$ done by the pressure and exterior force, the energy $E_{TD}$ dissipation due to thermal diffusion, and the energy $E_{GD}$ due to general diffusion as in Section \ref{sect2}. Based on Proposition \ref{prop28}, we obtain
\begin{equation}\label{eq51}
D_t \rho + ({\rm{div}}_\Gamma v ) \rho = 0 \text{ on } \Gamma_T .
\end{equation}

We first derive the momentum equation of system \eqref{eq11}. Set
\begin{equation*}
S_\Gamma (v , \sigma ) = e_1' (| D_\Gamma (v) |^2 ) D_\Gamma (v) + e_2' ( | {\rm{div}}_\Gamma v |^2 ) ({\rm{div}}_\Gamma v ) P_\Gamma - \sigma P_\Gamma .
\end{equation*}
From Theorems \ref{thm27} and \ref{thm29}, we have the following forces:
\begin{align*}
\frac{\delta Act}{\delta \widetilde{x}} & = \rho D_t v,\\
\frac{\delta E_{D + W}}{\delta v} & =  {\rm{div}}_\Gamma S_\Gamma (v , \sigma ) + \rho F .
\end{align*}
Here $E_{D+ W} = E_D + E_W$. We assume the following energetic variational principle:
\begin{equation*}
\frac{\delta Act}{\delta \widetilde{x}} = \frac{\delta E_{D + W}}{\delta v}
\end{equation*}
to have
\begin{equation}\label{eq52}
\rho D_t v = {\rm{div}}_\Gamma S_\Gamma ( v , \sigma ) + \rho F .
\end{equation}
Note that we may use $A_B$ instead of $Act$.

Secondly, we consider the internal energy. Applying the surface transport theorem with \eqref{eq51}, we see that
\begin{equation*}
\frac{d}{d t} \int_{\Gamma (t)} \frac{1}{2} \rho |v|^2 { \ }d \mathcal{H}^2_x = \int_{\Gamma (t)} \rho D_t v \cdot v { \ }d \mathcal{H}_x^2.
\end{equation*}
Using system \eqref{eq52} and \eqref{eq29}, we check that for $t_1 < t_2$,
\begin{multline}\label{eq53}
 \int_{\Gamma (t_2)} \frac{1}{2} \rho | v |^2 { \ }d \mathcal{H}^2_x + \int_{t_1}^{t_2} \int_{\Gamma ( \tau )} \{ \tilde{e}_D - ( {\rm{div}}_\Gamma v ) \sigma \} { \ }d \mathcal{H}^2_x d \tau\\
 = \int_{\Gamma (t_1)} \frac{1}{2} \rho | v |^2 { \ }d \mathcal{H}^2_x + \int_{t_1}^{t_2} \int_{\Gamma ( \tau )} \rho F \cdot v d \mathcal{H}^2_x d \tau + \int_{t_1}^{t_2} \int_{\partial \Gamma ( \tau )} B d \mathcal{H}^1_x d \tau. 
\end{multline}
Here
\begin{align}
\tilde{e}_D & = e_1'(|D_\Gamma (v)^2| ) |D_\Gamma (v)|^2 + e_2'(|{\rm{div}}_\Gamma v |^2 ) |{\rm{div}}_\Gamma v |^2,\label{eq54}\\
B & = \{ S_\Gamma (v , \sigma ) \nu \} \cdot v. \label{eq55}
\end{align}
Now we assume that $B \equiv 0$ and $F \equiv 0$. Then we have
\begin{equation*}
\int_{\Gamma (t_2)} \frac{1}{2} \rho | v |^2 { \ }d \mathcal{H}^2_x + \int_{t_1}^{t_2} \int_{\Gamma ( \tau )} \{ \tilde{e}_D - ( {\rm{div}}_\Gamma v ) \sigma \} { \ }d \mathcal{H}^2_x d \tau = \int_{\Gamma (t_1)} \frac{1}{2} \rho | v |^2 { \ }d \mathcal{H}^2_x. 
\end{equation*}
This implies that $\tilde{e}_D - ({\rm{div}}_\Gamma v ) \sigma $ is the density for the dissipation energy and the work done by pressure of our fluid system. From Theorem \ref{thm27} we have the following forces:
\begin{align}
\frac{\delta E_{TD}}{\delta \theta} & = {\rm{div}}_\Gamma \{ e'_3 (| {\rm{grad}}_\Gamma \theta |^2) {\rm{grad}}_\Gamma \theta \} ,\label{eq56}\\
\frac{\delta E_{GD}}{\delta C} & = {\rm{div}}_\Gamma \{ e'_4 (| {\rm{grad}}_\Gamma C |^2) {\rm{grad}}_\Gamma C \}. \label{eq57}
\end{align}
Applying the first law of thermodynamic, we obtain
\begin{equation}
\rho D_t e = {\rm{div}}_\Gamma q_\theta + \tilde{e}_D - ({\rm{div}}_\Gamma v ) \sigma \text{ on }\Gamma_T. \label{eq58}
\end{equation}
More precisely, we assume that for every $\Omega (t) \subset \Gamma (t)$
\begin{equation*}
\frac{d }{d t} \int_{\Omega (t)} \rho e { \ }d \mathcal{H}^2_x = \int_{\Omega (t)} \{ {\rm{div}}_\Gamma q_\theta + \tilde{e}_D - ({\rm{div}}_\Gamma v ) \sigma \} { \ }d \mathcal{H}^2_x.
\end{equation*}
Then we use \eqref{eq51} and the surface transport theorem to have \eqref{eq58}.

Finally, we derive the generalized diffusion system. We assume that the change of rate of the concentration $C$ equals to the force derived from a variation of the energy dissipation due to general diffusion, that is, for every $\Omega (t) \subset \Gamma (t)$, assume that
\begin{equation*}
\frac{d}{d t} \int_{\Omega (t)} C { \ } d \mathcal{H}^2_x = \int_{\Omega (t)} \frac{\delta E_{GD}}{\delta C} { \ }d \mathcal{H}^2_x.
\end{equation*}
Then we use the surface transport theorem to have
\begin{equation}
D_t C + ({\rm{div}}_\Gamma v ) C =  {\rm{div}}_\Gamma \{ e'_4 (| {\rm{grad}}_\Gamma C |^2) {\rm{grad}}_\Gamma C \} \text{ on } \Gamma_T.  \label{eq59}
\end{equation}
Combining \eqref{eq51}, \eqref{eq52}, \eqref{eq58}, and \eqref{eq59}, we therefore have the generalized compressible fluid system \eqref{eq11}.

\subsection{On Generalized Compressible Fluid System}\label{subsec52}

Let us study the generalized compressible fluid system \eqref{eq11}. We admit system \eqref{eq11}. Set the total energy $E_S = \rho |v|^2/2 + \rho e$ and
\begin{equation*}
D_t^N f = \partial_t f + (v \cdot n) (n \cdot \nabla ) f .
\end{equation*}
It is easy to check that \eqref{eq11} satisfies \eqref{eq12}.

We first investigate the enthalpy, the entropy, and the Helmholtz free energy of the compressible fluid system \eqref{eq11}. Assume that $\rho$ and $\theta$ are positive functions. Set the enthalpy $h = h (x,t)$ as follows $h = e + \sigma / \rho$. Then we have
\begin{equation*}
\rho D_t h = {\rm{div}}_\Gamma \{ e'_3 (|{\rm{grad}}_\Gamma \theta |^2) {\rm{grad}}_\Gamma \theta \} + \tilde{e}_D + D_t \sigma \text{ on }\Gamma_T.
\end{equation*}
Assume that the entropy $s = s (x,t)$ satisfy the Gibbs condition:$D_t e = \theta D_t s - \sigma D_t (1/\rho)$. Then we obtain
\begin{equation*}
\theta \rho D_t s = {\rm{div}}_\Gamma \{ e'_3 ( |{\rm{grad}}_\Gamma \theta |^2) {\rm{grad}}_\Gamma \theta \} + \tilde{e}_D \text{ on }\Gamma_T.
\end{equation*}
Set the Helmholtz free energy $e_F = e - \theta s$. A direct calculation gives
\begin{equation*}
\rho D_t e_F + s \rho D_t \theta - S_\Gamma (v , \sigma ) : D_\Gamma (v) = - \tilde{e}_D \text{ on }\Gamma_T.
\end{equation*}
Therefore we have
\begin{equation}\label{Eq510}
\begin{cases}
\rho D_t h = {\rm{div}}_\Gamma \{ e'_3 (|{\rm{grad}}_\Gamma \theta |^2) {\rm{grad}}_\Gamma \theta \} + \tilde{e}_D + D_t \sigma,\\
\theta \rho D_t s = {\rm{div}}_\Gamma \{ e'_3 ( |{\rm{grad}}_\Gamma \theta |^2) {\rm{grad}}_\Gamma \theta \} + \tilde{e}_D,\\
\rho D_t e_F + s \rho D_t \theta - S_\Gamma (v , \sigma ) : D_\Gamma (v) = - \tilde{e}_D.
\end{cases}
\end{equation}
We easily check that \eqref{Eq510} satisfies \eqref{eq13} and \eqref{eq14}.

Next, we study the conservation and energy laws of system \eqref{eq11}. We assume that for $0 < t <T$
\begin{equation}\label{Eq511}
\frac{\partial \theta }{\partial \nu} \bigg|_{\partial \Gamma (t)} = 0,{ \ }\frac{\partial C }{\partial \nu} \bigg|_{\partial \Gamma (t)} = 0,{ \ }S_\Gamma (v , \sigma ) \nu \bigg|_{\partial \Gamma (t)} = 0,
\end{equation}
where ${\partial f}/{\partial \nu} = (\nu \cdot \nabla_\Gamma ) f$. Since
\begin{align*}
\frac{d}{d t} \int_{\Gamma (t)} \rho v { \ }d \mathcal{H}^2_x & = \int_{\Gamma (t)} \rho D_t v { \ }d \mathcal{H}_x^2,\\
\frac{d}{d t} \int_{\Gamma (t)} E_S { \ }d \mathcal{H}^2_x & = \int_{\Gamma (t)} \{ \rho D_t v \cdot v + \rho D_t e  \} { \ }d \mathcal{H}_x^2,\\
\frac{d}{d t} \int_{\Gamma (t)} C { \ }d \mathcal{H}^2_x & = \int_{\Gamma (t)} \{ D_t C + ({\rm{div}}_\Gamma v ) C  \}{ \ }d \mathcal{H}_x^2
\end{align*}
from the surface transport theorem and \eqref{eq51}, we use system \eqref{eq11}, \eqref{eq29}, and \eqref{Eq511} to see that \eqref{eq19}, \eqref{Eq111}, and \eqref{Eq112} hold for $t_1 < t_2$.

Finally, we investigate the law of conservation of angular momentum. Since $D_t x = 2 v$ and $v \times v = 0$, we use the surface transport theorem and \eqref{eq11} to find that for $t_1 < t_2$
\begin{multline*}
\int_{\Gamma (t_2)} x \times ( \rho v ) { \ }d \mathcal{H}^2_x = \int_{\Gamma (t_1)} x \times (\rho v) { \ }d \mathcal{H}^2_x + \int_{t_1}^{t_2} \int_{\Gamma (\tau)} x \times ( \rho F ) { \ }d \mathcal{H}^2_x d \tau\\
 + \int_{t_1}^{t_2} \int_{\Gamma (\tau)} x \times \{ {\rm{div}}_\Gamma S_\Gamma (v , \sigma ) \}  { \ }d \mathcal{H}^2_x d \tau .
\end{multline*}
If we prove that
\begin{equation}\label{Eq512}
\int_{\Gamma ( t )} x \times \{ {\rm{div}}_\Gamma S_\Gamma (v , \sigma ) \}  { \ }d \mathcal{H}^2_x = { }^t ( 0, 0 , 0),
\end{equation}
then we have the law of conservation of angular momentum \eqref{Eq110}. Now we show \eqref{Eq512}. Set $M = S_\Gamma (v , \sigma )$, that is, $[M]_{ij} = [S_\Gamma (v, \sigma )]_{i j}$. A direct calculation gives\footnotesize
\begin{multline*}
x \times {\rm{div}}_\Gamma M = \\
\begin{pmatrix}
x_2 ( \partial_1^\Gamma [M]_{31} + \partial_2^\Gamma [M]_{32} + \partial_3^\Gamma [M]_{33} ) - x_3 ( \partial_1^\Gamma [M]_{21} + \partial_2^\Gamma [M]_{22} + \partial_3^\Gamma [M]_{23} ) \\
x_3 ( \partial_1^\Gamma [M]_{11} + \partial_2^\Gamma [M]_{12} + \partial_3^\Gamma [M]_{13} ) - x_1 ( \partial_1^\Gamma [M]_{31} + \partial_2^\Gamma [M]_{32} + \partial_3^\Gamma [M]_{33} ) \\
x_1 ( \partial_1^\Gamma [M]_{21} + \partial_2^\Gamma [M]_{22} + \partial_3^\Gamma [M]_{23} ) - x_2 ( \partial_1^\Gamma [M]_{11} + \partial_2^\Gamma [M]_{12} + \partial_3^\Gamma [M]_{13} ) 
\end{pmatrix}.
\end{multline*}\normalsize
We shall prove that for each $i,j =1,2,3,$\footnotesize
\begin{equation*}
\int_{\Gamma (t)} \{ x_i ( \partial_1^\Gamma [M]_{j 1} + \partial_2^\Gamma [M]_{j 2} + \partial_3^\Gamma [M]_{j 3} ) - x_j ( \partial_1^\Gamma [M]_{i 1} + \partial_2^\Gamma [M]_{i 2} + \partial_3^\Gamma [M]_{i 3} ) \} { \ }d \mathcal{H}^2_x = 0.
\end{equation*}\normalsize
Using the integration by parts \eqref{eq29} and the facts that $S_\Gamma (v, \sigma) n = { }^t (0,0,0)$ on $\Gamma (t)$ and $S_\Gamma (v,\sigma ) \nu = { }^t (0,0,0)$ on $\partial \Gamma (t)$, we see that for each $i,j =1,2,3,$
\begin{equation*}
\int_{\Gamma (t)} \{ x_i ( \partial_1^\Gamma [M]_{j 1} + \partial_2^\Gamma [M]_{j 2} + \partial_3^\Gamma [M]_{j 3} ) { \ }d \mathcal{H}^2_x = - \int_{\Gamma (t)} [M]_{j i} { \ }d \mathcal{H}^2_x.
\end{equation*}
Since $[M]_{j i } = [M]_{i j}$, we check that\footnotesize
\begin{multline*}
\int_{\Gamma (t)} \{ x_i ( \partial_1^\Gamma [M]_{j 1} + \partial_2^\Gamma [M]_{j 2} + \partial_3^\Gamma [M]_{j 3} ) 
- x_j ( \partial_1^\Gamma [M]_{i 1} + \partial_2^\Gamma [M]_{i 2} + \partial_3^\Gamma [M]_{i 3} ) \} { \ }d \mathcal{H}^2_x\\
= - \int_{\Gamma (t)} [M]_{j i} { \ }d \mathcal{H}^2_x + \int_{\Gamma (t)} [M]_{i j} { \ }d \mathcal{H}^2_x = 0.
\end{multline*}\normalsize
This implies \eqref{Eq512}. Therefore Theorem \ref{Thm210} is proved. From \eqref{eq53} and \eqref{eq55}, we see the assertion $(\mathrm{i})$ of Theorem \ref{Thm211}.

\subsection{Tangential Compressible Fluid System}\label{subsec53}

Let us derive the generalized tangential compressible fluid system \eqref{Eq113} on the evolving surface $\Gamma (t)$ with a boundary. We assume that $v \cdot n = 0$ on $\Gamma (t)$. We set the generalized energy densities for compressible fluid on the surface as in Assumption \ref{ass12}, and the action integral $Act$, the energy $E_D$ dissipation due to the viscosities, the work $E_W$ done by the pressure and exterior force, the energy $E_{TD}$ dissipation due to thermal diffusion, and the energy $E_{GD}$ due to general diffusion as in Section \ref{sect2}. Based on Proposition \ref{prop28}, we have \eqref{eq51}.

We first derive the momentum equation of our tangential compressible fluid system. From Theorems \ref{thm27} and \ref{thm29}, we have the following forces:
\begin{align*}
\frac{\delta Act}{\delta \widetilde{x}} \bigg|_{z \cdot n =0} & = P_\Gamma \rho D_t v,\\
\frac{\delta E_{D + W}}{\delta v} \bigg|_{\varphi \cdot n =0 } & = P_\Gamma {\rm{div}}_\Gamma S_\Gamma (v , \sigma ) + P_\Gamma \rho F .
\end{align*}
Here $E_{D + W} = E_D + E_W$. We assume the following energetic variational principle:
\begin{equation*}
\frac{\delta Act}{\delta \widetilde{x}} \bigg|_{z \cdot n =0} = \frac{\delta E_{D + W}}{\delta v}\bigg|_{\varphi \cdot n =0 }
\end{equation*}
to have
\begin{equation}\label{Eq513}
P_\Gamma \rho D_t v = P_\Gamma {\rm{div}}_\Gamma S_\Gamma ( v , \sigma ) + P_\Gamma \rho F.
\end{equation}
Note that we may use $A_B$ instead of $Act$.

Secondly, we consider the internal energy. By an argument similar to that in subsection \ref{subsec51}, we obtain \eqref{eq53}. Here $\tilde{e}_D$ and $B$ are defined by \eqref{eq54} and \eqref{eq55}. Notice that $P_\Gamma v = v$ and $v \cdot v = P_\Gamma v \cdot v$. Now we assume that $B \equiv 0$ and $F \equiv 0$. Then we have an energy equality:
\begin{equation*}
\int_{\Gamma (t_2)} \frac{1}{2} \rho | v |^2 { \ }d \mathcal{H}^2_x + \int_{t_1}^{t_2} \int_{\Gamma ( \tau )} \{ \tilde{e}_D - ( {\rm{div}}_\Gamma v ) \sigma \} { \ }d \mathcal{H}^2_x d \tau = \int_{\Gamma (t_1)} \frac{1}{2} \rho | v |^2 { \ }d \mathcal{H}^2_x. 
\end{equation*}
This implies that $\tilde{e}_D - ( {\rm{div}}_\Gamma v ) \sigma$ is the density for the dissipation energy and the work done by pressure of the our fluid system. From Theorem \ref{thm27} we have \eqref{eq56} and \eqref{eq57}. Applying the first law of thermodynamics, we have \eqref{eq58}.

Finally, we derive the generalized diffusion system. We assume that the change of rate of the concentration $C$ equals to the force derived from a variation of the energy dissipation due to general diffusion. Then we have \eqref{eq59}.

Combining \eqref{eq51}, \eqref{Eq513}, \eqref{eq58}, and \eqref{eq59}, we therefore have the tangential compressible fluid system \eqref{Eq113}. Notice that $D_t f = D_t^\Gamma f $ since $v \cdot n =0$. From \eqref{eq53} and \eqref{eq55} the assertion $(\mathrm{ii})$ of Theorem \ref{Thm211} is proved.

\subsection{Barotropic Compressible Fluid Systems}\label{subsec54}

Based on Proposition \ref{prop28}, we obtain \eqref{eq51}. From the assertion $(\mathrm{ii})$ of Theorem \ref{thm29} we obtain systems \eqref{Eq116} and \eqref{Eq117}. The proof of the assertions $(\mathrm{iii})$ and $(\mathrm{iv})$ in Theorem \ref{Thm211} are left to the reader.


\begin{thebibliography}{99}

\bibitem{AC12}Marc Arnaudon and Ana Bela Cruzeiro, \emph{Lagrangian Navier-Stokes diffusions on manifolds: variational principle and stability}. Bull. Sci. Math. 136 (2012), no. 8, 857--881. MR2995006

\bibitem{Arn66}V.I. Arnol'd, \emph{Sur la g\'eom\'etrie diff\'erentielle des groupes de {L}ie de dimension infinie et ses applications \`a l'hydrodynamique des fluides parfaits}. (French) Ann. Inst. Fourier (Grenoble) 16 1966 fasc. 1 319--361. MR0202082 

\bibitem{Arn97}V.I. Arnol'd, \emph{Mathematical methods of classical mechanics}. Translated from the 1974 Russian original by K. Vogtmann and A. Weinstein. Corrected reprint of the second (1989) edition. Graduate Texts in Mathematics, 60. Springer-Verlag, New York, 1997. xvi+516 pp. ISBN: 0-387-96890-3. MR1345386


\bibitem{Bet86}David E. Betounes, \emph{Kinematics of submanifolds and the mean curvature normal}. Arch. Rational Mech. Anal. 96  (1986), no. 1, 1--27. MR0853973

\bibitem{BP12}Dieter Bothe and Jan Pr\"{u}ss, \emph{On the two-phase Navier-Stokes equations with Boussinesq-Scriven surface fluid}. J. Math. Fluid Mech. 12 (2010), no. 1, 133--150. MR2602917. 

\bibitem{Bou13}M. J. Boussinesq, \emph{Sur l'existence d'une viscosit\'{e} seperficielle, dans la mince couche de transition s\'{e}parant un liquide d'un autre fluide contigu}, Ann. Chim. Phys. 29 (1913), 349--357.

\bibitem{Cia05}Philippe G. Ciarlet, \emph{An introduction to differential geometry with applications to elasticity}. Reprinted from J. Elasticity 78/79 (2005), no. 1-3 [ MR2196098]. Springer, Dordrecht, 2005. iv+209 pp. ISBN: 978-1-4020-4247-8; 1-4020-4247-7 MR2312300

\bibitem{DE07}Gerhard Dziuk and Charles M. Elliott, \emph{Finite elements on evolving surfaces}. IMA J. Numer. Anal. 27 (2007), no. 2, 262--292. MR2317005. 

\bibitem{EM70}David .G. Ebin and Jerrold Marsden, \emph{Groups of diffeomorphisms and the motion of an incompressible fluid}. Ann. of Math. (2) 92 1970 102--163. MR0271984.

\bibitem{GP01}Ren\'{e}e Gatignol and Roger Prud'homme, \emph{Mechanical and thermodynamical modeling of fluid interfaces}. World Scientific, Singapore, 2001. xviii,+248 pp. ISBN=9810243057.

\bibitem{GSW89}Morton E. Gurtin, Allan Struthers, and William O. Williams, \emph{A transport theorem for moving interfaces}. Quart. Appl. Math.  47  (1989), no. 4, 773--777. MR1031691

\bibitem{GFA10}Morton E. Gurtin, Elliot Fried, and Lallit Anand, \emph{The mechanics and thermodynamics of continua}. Cambridge University Press, Cambridge, 2010. xxii+694 pp. ISBN: 978-0-521-40598-0 MR2884384

\bibitem{Gya70}Istv\'{a}n Gyarmati. \emph{Non-equilibrium Thermodynamics}. Springer, 1970. ISBN:978-3-642-51067-0

\bibitem{HKL10}Yunkyong Hyon, Do Y. Kwak, and Chun Liu, \emph{Energetic variational approach in complex fluids: maximum dissipation principle}. Discrete Contin. Dyn. Syst. 26 (2010), no. 4, 1291--1304. MR2600746

\bibitem{Jos11}J\"{u}rgen Jost, \emph{Riemannian geometry and geometric analysis}. Sixth edition. Universitext. Springer, Heidelberg, 2011. xiv+611 pp. ISBN: 978-3-642-21297-0 MR2829653

\bibitem{K18}Hajime Koba, \emph{On Derivation of Compressible Fluid Systems on an Evolving Surface}, Quart. Appl. Math. 76  (2018),  no. 2, 303--359.

\bibitem{K19}Hajime Koba, \emph{On Generalized Diffusion and Heat Systems on an Evolving Surface with a Boundary}, Quart. Appl. Math. (Quarterly of Applied Mathematics) 78 (2020), 617-640

\bibitem{KLG17}Hajime Koba, Chun Liu, and Yoshikazu Giga \emph{Energetic variational approaches for incompressible fluid systems on an evolving surface}, Quart. Appl. Math. 75 (2017), no 2, 359--389. MR3614501. \emph{Errata to Energetic variational approaches for incompressible fluid systems on an evolving surface}. Quart. Appl. Math. 76 (2018), no 1, 147--152.

\bibitem{KS17}Hajime Koba and Kazuki Sato, \emph{Energetic variational approaches for non-Newtonian fluid systems}. Z. Angew. Math. Phys (2018) 69: 143. https://doi.org/10.1007/s00033-018-1039-1. 

\bibitem{MY02}Yoshihiko Mitsumatsu and Yasuhisa Yano, \emph{Geometry of an incompressible fluid on a Riemannian manifold}. (Japanese) Geometric mechanics (Japanese) (Kyoto, 2002). S$\overline{u}$rikaisekikenky$\overline{u}$sho K$\overline{o}$ky$\overline{u}$roku No. 1260 (2002), 33--47.

\bibitem{Ons31a}Lars Onsager \emph{Reciprocal Relations in Irreversible Processes}. I. Physical Review. (1931);37:405-109 DOI:https://doi.org/10.1103/PhysRev.37.405

\bibitem{Ons31b}Lars Onsager \emph{Reciprocal Relations in Irreversible Processes}. II. Physical Review. (1931);38:2265-79 DOI:https://doi.org/10.1103/PhysRev.38.2265

\bibitem{Scr60}L.E. Scriven, \emph{Dynamics of a fluid interface Equation of motion for Newtonian surface fluids}. Chem. Eng. Sci. 12 (1960), 98--108.

\bibitem{Ser59}James Serrin, \emph{Mathematical principles of classical fluid mechanics}. 1959  Handbuch der Physik (herausgegeben von S. Flugge), Bd. 8/1, Stromungsmechanik I (Mitherausgeber C. Truesdell)  pp. 125--263 Springer-Verlag, Berlin-Gottingen-Heidelberg MR0108116.[Fluid Dynamics I / Stroemungsmechanik I (Handbuch der Physik Encyclopedia of Physics) 2013]. MR0108116

\bibitem{Sla64}John C. Slattery, \emph{Momentum and moment-of-momentum balances for moving surfaces}Chemical Engineering Science, Volume 19, 1964, Pages 379--385.

\bibitem{SSO07}John C. Slattery,  Leonard. Sagis, and Eun-Suok Oh, \emph{Interfacial transport phenomena. Second edition}. Springer, New York, 2007. xviii+827 pp. ISBN: 978-0-387-38438-2; 0-387-38438-3 MR2284654.

\bibitem{Sim83}Leon Simon, \emph{Lectures on geometric measure theory}. Proceedings of the Centre for Mathematical Analysis, Australian National University, 3. Australian National University, Centre for Mathematical Analysis, Canberra, 1983. vii+272 pp. ISBN: 0-86784-429-9 MR0756417.

\bibitem{Str73}Hon. J.W. Strutt M.A., \emph{Some General Theorems Relating to Vibrations}. Proc. London. Math. Soc. (1873);IV:357-68. MR1575554

\bibitem{Tay92}Michael E. Taylor, \emph{Analysis on Morrey spaces and applications to Navier-Stokes and other evolution equations}. Comm. Partial Differential Equations 17 (1992), no. 9-10, 1407--1456. MR1187618. 
\end{thebibliography}
\end{document}